\newtheorem{theorem}{Theorem}
\newtheorem{pro}{Proposition}
\newtheorem{cor}{Corollary}
\date{}
\begin{document}

\title{\bf Filter Banks on Discrete  Abelian Groups}

\author{
{\bf A.~G. Garc\'{\i}a}\thanks{E-mail:\texttt{agarcia@math.uc3m.es}}, \,\,
{\bf M.~A. Hern\'andez-Medina}\thanks{E-mail:\texttt{miguelangel.hernandez.medina@upm.es}}
{\bf\,\, and \,\, G. P\'erez-Villal\'on}\thanks{E-mail:\texttt{gperez@euitt.upm.es}}
}
\maketitle
\begin{itemize}
\item[*] Departamento de Matem\'aticas, Universidad Carlos III de Madrid,
 Avda. de la Universidad 30, 28911 Legan\'es-Madrid, Spain.
\item[\dag\ddag] Departamento de Matem\'atica Aplicada a las Tecnolog\'{\i}as de la Informaci\'on y las Comunicaciones, E.T.S.I.T., U.P.M.,
 Avda. Complutense 30, 28040 Madrid, Spain.
\end{itemize}
\begin{abstract}
In this work we provide polyphase, modulation, and frame theoretical analyses of a filter bank on a  discrete abelian group. Thus,  multidimensional or cyclic filter banks as well as filter banks for signals  in $\ell^2(\mathbb{Z}^d\times \mathbb{Z}_s)$ or 
$\ell^2(\mathbb{Z}_r \times \mathbb{Z}_s)$ spaces are studied in a unified way. We obtain  perfect reconstruction conditions and the corresponding frame bounds.  
\end{abstract}
{\bf Keywords}: Discrete abelian groups,  Locally compact abelian (LCA) groups, Frames, Multidimensional filter banks, Cyclic filter banks. 

\noindent{\bf AMS}: 22B05; 42C15; 94A12.
\section{Introduction}

The aim of this paper is to provide a filter bank theory for processing signals in the space $\ell^2(G)$ where $G$ denotes a countable discrete abelian group. Working in this general setting allows us to study all the classical groups associated with filter banks in digital signal processing in one go. Thus, unidimensional (setting $\ell^2(G)=\ell^2(\mathbb{Z})$), multidimensional ($\ell^2(G)=\ell^2(\mathbb{Z}^d)$), cyclic filter banks ($\ell^2(G)=\ell^2(\mathbb{Z}_s)$), as well as  filter banks processing signals in the spaces $\ell^2(\mathbb{Z}^d\times \mathbb{Z}_s)$, $\ell^2(\mathbb{Z}_r \times \mathbb{Z}_s)$, $\ell^2(\mathbb{Z}_s^d)$ or $\ell^2(\mathbb{Z}_r \times \mathbb{Z}_s \times \mathbb{Z}_v)$ are englobed in the present study.

The proposed abstract group approach is not just a unified way of dealing with classical discrete groups $\mathbb{Z}, \mathbb{Z}^d$ or $\mathbb{Z}_s$; it also allows us to deal with  products of these groups.  This has been pointed out in \cite{ole:15} and it has consequences from a practical point of view: for example, multichannel video signal involves the group $\mathbb{Z}^d \times \mathbb{Z}_s$, where $d$ is the number of channels and $s$ the number of pixels of each image.
Hence the availability of an abstract filter bank theory becomes a useful tool to englobe different digital signal processing problems. 

Besides, nowadays there exists a mathematical literature  dealing with abstract or applied mathematical problems which are studied from a theoretical  groups point of view. See, in particular, Refs.~\cite{barbieri:15,cabrelli:10,cabrelli:12,dodson:07,ole:15,garcia:16} where shift-invariant spaces, Fourier-like frames or sampling problems are considered on LCA groups.
An introduction to group theory and  symmetries in signal processing can be found in Ref.~\cite{sinha:10}.

Classical filter banks have turned out to be  very useful in digital signal processing and in wavelet theory (see, for instance, \cite{mallat:09,strang:96,vaid:93,vetterli:95} and references therein).  One of the main reasons why filter banks has become  so useful has been the use of the polyphase analysis, first carried out by Vetterli \cite{vetterli:86} and Vaidyanathan \cite{vaid:87}, which simplifies considerably the theory, and is especially convenient in their practical design. 
The original  filter bank theory for unidimensional signals in $\ell^2(\mathbb{Z})$ was extended for multidimensional filter banks (see, for instance, \cite{vetterli:92,vaid:93,viscito:91}), as well as for cyclic filter banks \cite{vaid:97,vaid:99}.

Also, associated to a unidimensional analysis filter bank there is a sequence of shifts $\big\{T_nf_k:=f_k(\cdot-n)\big\}_{k=1, 2,\ldots,K;\,n\in \mathbb{Z}}$ of  $K$ elements $f_k$ in $\ell^2(\mathbb{Z})$. The frame property of this sequence give information about the corresponding filter bank: its dual frames provide synthesis filter banks, and its frame bounds  provide  information on the filter bank stability. See, for instance, Refs.~\cite{bol:98,chai:07,johnson:08,vetterli:98,fickus:13} for the unidimensional $\ell^2(\mathbb{Z)}$ setting.

In this paper we introduce the filter bank concept in the setting of a discrete abelian group $G$, and we generalize the polyphase representation for classical filter banks to our setting. This polyphase representation provide a suitable perfect reconstruction condition.  Besides, we extend the frame analysis to this new $\ell^2(G)$ setting. In particular and as far as we know we carry out the first frame analysis for multidimensional  filter banks.

Although our study is done in the polyphase domain, for the sake of completeness, we also include the filter bank representation in the modulation domain, as well as the relationship between polyphase and modulation matrices. The modulation matrix in the group setting was firstly introduced in \cite{Behmard:99}.

The paper is organized as follows: Section \ref{section2}  introduces the properties of  Fourier transform for discrete abelian group used along the article. Section \ref{section3} contains the main results in the paper: we provide a polyphase representation of a filter bank in $\ell^2(G)$ obtaining a perfect reconstruction condition; we derive the corresponding frame analysis obtaining the optimal frame bounds; we also include the filter bank representation in the modulation domain and the relationship between polyphase and modulation matrices. Finally, in Section \ref{section4} we apply the results in Section \ref{section3} to a wide variety of examples.
\section{Some preliminaries on harmonic analysis on groups}
\label{section2}
The results about harmonic analysis on  locally compact abelian (LCA) groups are borrowed from Ref.~\cite{folland:95}; see also \cite{Hewitt:70} or \cite{rudin:90}. Note that, in particular, a countable discrete abelian group is a second countable Hausdorff LCA group. 
\subsection{Convolutions}
Let $G$ be a countable discrete abelian  group with the  operation group denoted by $+$. For, $1\le p<\infty$, $\ell^p(G)$ denotes the set of functions
$x:G\mapsto \mathbb{C}$ such that
$\|x\|_p^p:=\sum_{n\in G} |x(n)|^p < \infty$. For $x,y \in \ell^2(G)$ we define its {\em convolution} as
\[
(x\ast y)(m):= \sum_{n\in G} x(n)
y(m-n)\,, \quad m\in G\,.
\]
The series above converges absolutely for any $m\in G$ \, \cite[Proposition 2.40]{folland:95}. According to \cite[Proposition 2.39]{folland:95}, if $x\in \ell^2(G)$ and $y\in \ell^1(G)$ then $x\ast y\in \ell^2(G)$ and  \begin{equation}
\label{acotacion}
\|x\ast y\|_2\le \|x\|_2 \, \|y\|_1.
\end{equation}
\subsection{The Fourier transform}
Let $\mathbb{T}=\{z\in \mathbb{C}: |z|=1\}$ be the unidimensional torus. 
We said that $\xi:G\mapsto \mathbb{T}$ is a character of $G$ if $\xi(n+m)=\xi(n)\xi(m)$  for all $n,m\in G$. We denote $\xi(n)=\langle n,\xi \rangle$.  Defining $(\xi+\gamma)(n)=\xi(n)\gamma(n)$,  the set of characters $\widehat{G}$ with the operation $+$ is a group, called the dual group of $G$. For $x\in \ell^1(G)$ we define its {\em Fourier transform} as
\[
X(\xi)=\widehat{x}(\xi):=\sum_{n\in G}x(n) \overline{\langle n,\xi \rangle},\quad \xi\in \widehat{G}.
\]

It is known \cite[Theorem 4.5]{folland:95} that $\widehat{\mathbb{Z}}\cong \mathbb{T}$, with $\langle n,z \rangle = z^n$, and
$\widehat{\mathbb{Z}}_s\cong \mathbb{Z}_s:=\mathbb{Z}/s\mathbb{Z}$, with $\langle n,m \rangle = W_s^{nm}$, where $W_s=e^{2\pi i/s}$.
Thus, the Fourier transform on  $\mathbb{Z}$  is
the $z$-transform, \[X(z)=\sum_{n\in \mathbb{Z} } x(n) z^{-n}\] and the Fourier transform on $\mathbb{Z}_s$ is
the s-point DFT, \[X(m)=\sum_{n\in \mathbb{Z}_s } x(n) W_s^{-nm}.\]

There exists a unique  measure, called the Haar measure, $\mu$ on  $\widehat{G}$ satisfying  $\mu(\xi+ E)=\mu(E)$,  for every Borel set $E\subset \widehat{G}$ \cite[Section 2.2]{folland:95},
and $\mu(\widehat{G})=1$. We denote 
$\int_{\widehat{G}} X(\xi) d\xi=\int_{\widehat{G}} X(\xi) d\mu(\xi)$.
If $G=\mathbb{Z}$, 
\[
\int_{\widehat{G}} X(\xi) d\xi=\int_{\mathbb{T}} X(z) dz= \frac{1}{2\pi}\int_0^{2\pi} X(e^{iw}) dw\,,
\]
and if $G=\mathbb{Z}_s$, \[\int_{\widehat{G}} X(\xi) d\xi=\int_{\mathbb{Z}_s} X(n) dn= \frac{1}{s}\sum_{n\in \mathbb{Z}_s} X(n)\,.\]

For $1\le p<\infty$,  $L^p(\widehat{G})$ denotes the set of measurable functions
$X:\widehat{G}\mapsto \mathbb{C}$ such that
$
\|X\|_p^p:=\int_{\widehat{G}} |X(\xi)|^p d\xi < \infty
$. The Fourier transform on $\ell^1(G)\cap \ell^2(G)$ is an isometry on a dense subspace of $L^2(\widehat{G})$. Thus, by Plancherel Theorem it can be extended in a unique manner to a unitary operator  of $\ell^2(G)$ onto $L^2(\widehat{G})$ \cite[p. 99]{folland:95}.

If $x\in \ell^1(G)$ and $X\in L^1(\widehat{G})$ then 
 \[
x(n)=\int_{\widehat{G}}  X(\xi) \langle n,\xi\rangle d\xi,\quad n\in G \qquad \text{(Inversion Theorem\,\,\cite[Theorem 4.32]{folland:95})}
\]
and,  if $x\in \ell^2(G)$ and $h\in \ell^1(G)$ then 
\[
(x\ast h)^{\wedge}(\xi)= X(\xi)H(\xi),\quad \text{a.e.}\,\, \xi\in \widehat{G} \qquad \text{\cite[Theorem 31.27]{Hewitt:70}}.
\]
If $G_1,\ldots G_d$ are abelian discrete groups then the dual group of the product group is 
\[
\big(G_1 \times\ldots \times G_d\big)^{\wedge}\cong \widehat{G}_1 \times\ldots \times \widehat{G}_n \qquad \text{\cite[Proposition 4.6]{folland:95}},
\]  
with $\big\langle\, (x_1,x_2,\ldots,x_d)\, ,\, (\xi_1,\xi_2\ldots,\xi_d)\, \big\rangle = \langle x_1,\xi_1\rangle \langle x_2,\xi_2\rangle\cdots \langle x_d,\xi_d\rangle$.
Hence, $\widehat{\mathbb{Z}^d}\cong \mathbb{T}^d$ and the corresponding Fourier transform is
\[
X(z)= \sum_{n\in \mathbb{Z}^d} x(n)z^{-n},\quad z=(z_1,\ldots,z_d)\in \mathbb{T}^d,
\]
where $z^n= z_1^{n_1}\ldots z_d^{n_d}$. Besides,
$\widehat{\mathbb{Z}_s \times \mathbb{Z}_r }\cong \mathbb{Z}_s \times \mathbb{Z}_r$ and the corresponding Fourier transform is
\[
X(m)= \sum_{n\in \mathbb{Z}_s\times \mathbb{Z}_r} x(n) W_s^{-n_1 m_1}W_r^{-n_2 m_2},\quad m=(m_1,m_2)\in \mathbb{Z}_s\times \mathbb{Z}_r.
\]
\section{Filter banks on discrete groups}
\label{section3}

Let us begin the section giving a short introduction to the {\em polyphase transform} which is the appropriate tool for  analyzing and designing a classical filter bank.  In the $\ell^2(\mathbb{Z})$ setting, a filter bank involves an $L$-fold decimator, also called downsampler or compressor, which takes the input signal 
$\{x(n)\}_{n\in \mathbb{Z}}$ and  produces the output $(\downarrow_L x)=\{x(Ln)\}_{n\in \mathbb{Z}}$, where the sampling period $L$ is a natural number.  The {\em polyphase transform} of  the signal $\{x(n)\}_{n\in \mathbb{Z}}$ is the $L$-dimensional vector which entries are the $z$-transform of the so called {\em polyphase components} $\big\{x(Ln+l)\big\}_{n\in \mathbb{Z}}$, $l=0,1,\ldots,L-1$, of the signal $x$. Namely, 
\[
\mathbf{X}(z)=\Big[\sum_{n\in \mathbb{Z}} x(Ln+l)z^{-n}\Big]_{l=0,1,\ldots,L-1}.
\]
A filter bank designed to process signals in $\ell^2(G)$, where $G$ is a countable discrete abelian group, should involve an $M$-decimator taking the input signal $\{x(n)\}_{n\in G}$ and  producing as output the restriction of $x$ to a subgroup $M$ of $G$ of finite index $L$ (also called a {\em lattice} of $G$), i.e., 
$(\downarrow_M x):=\{x(n)\}_{n\in M}$.   In order to generalize the  polyphase representation to this setting (see \cite{bol:97}), it comes naturally to define the polyphase transform of  $\{x(n)\}_{n\in G}$ as the $L$-dimensional vector whose entries are the $M$-Fourier transform (the Fourier transform  with respect to the subgroup $M$)  of the polyphase components $\{x(m+\ell)\}_{m\in M}$, $\ell \in \mathcal{L}$, where $\mathcal{L}$ is a set of representatives of the cosets of $M$. Namely (see the details below),
\begin{equation}
\label{poli}
\mathbf{X}(\gamma)=\Big[\sum_{m\in M} x(m+\ell) \overline{\langle m, \gamma\rangle} \Big]_{\ell\in \mathcal{L}},\qquad \gamma\in \widehat{M}.
\end{equation}
When $G=\mathbb{Z}^d$ the above transform becomes  that used in multidimensional filter banks \cite{vetterli:92,vaid:93,viscito:91}, while $G=\mathbb{Z}_{s}$ yields the transform used in cyclic filter banks \cite{vaid:97,vaid:99}. It is also worth to note that the considered polyphase transform \eqref{poli} can be obtained, as a particular case of the Zak transform for LCA groups (see, for instance, \cite{barbieri:15b, hernandez:10,kaniuth:98}), which generalizes the classical Zak transform; specifically, $\mathbf{X}(\gamma)= \big[Zx(\ell,\gamma) \big]_{\ell\in \mathcal{L}}$.
\subsection{The lattice $M$}

Throughout the article, we assume that {\em $M$ is a subgroup of $G$ with finite index $L$}; we fix a set $\mathcal{L}=\{\ell_0,\ldots,\ell_{L-1}\}$ of {\em representatives of the cosets of $M$}, i.e., the group $G$ can be decomposed as
\[
G= (\ell_0+M) \cup (\ell_1+M) \cup \ldots \cup (\ell_{L-1}+M)
\]
with $(\ell_r+M) \cap (\ell_{r'}+M)=\varnothing $ for $r\neq r'$ (the set $\mathcal{L}$ is also called a transversal or a  section of $M$). 
For instance, for $G=\mathbb{Z}$ and $M=L\mathbb{Z}$ we can take
$\mathcal{L}=\{0,1,\ldots,L-1\}$ since
\[
\mathbb{Z}= L\mathbb{Z} \cup (1+L\mathbb{Z}) \cup \cdots \cup (L-1+L\mathbb{Z}).
\] 
We denote
by $\ast_{_M}$ the convolution with respect to the subgroup $M$, i.e.,
\[
 \big(c \ast_{_M} d\big) (n):= \sum_{m\in M} c(m)\,d(n-m)\,, \quad n\in M\,.
\]
\subsection{The $M$-Fourier transform}
The {\em annihilator of $M$} is the subgroup of $\widehat{G}$ given by
$M^\perp:=\big\{\xi \in \widehat{G} : \langle m,\xi \rangle=1\,\,\text{for all}\,\, m\in M \big\}$, which  has $L$ elements \cite[Section 4.3]{folland:95}. 
We have that 
\[\widehat{M}\cong \widehat{G}/M^\perp \quad \text{with}\quad
 \big\langle m,\xi+ M^\perp \big\rangle =  \langle m,\xi \rangle\qquad\text{\cite[Theorem 4.39]{folland:95} }.
\]
We denote by $C(\xi + M^\perp)$ or $\widehat{c}(\xi + M^\perp)$ the Fourier transform of a function $c$ in the group $M$, 
 i.e.,
\[
 C(\xi+ M ^\perp)= \sum_{m\in M} c(m) \overline{\langle m, \xi+ M^\perp\rangle}= \sum_{m\in M} c(m) \overline{\langle m, \xi\rangle}.
\]
As we said above, our polyphase representation relies on this transform. Thus,
in many occasions to simplify the notation we denote the 
characters of $\widehat{M}$ by $\gamma$ instead of $\xi + M^\perp$.
To prevent confusions, we call $C(\gamma)$ the {\em $M$-Fourier transform of $c$}.
\subsection{The filter bank }
For a complex function $x$ with domain in $G$, we denote its  {\it restriction to the subgroup $M$} as
 \[
 (\downarrow_M x)(m)=x(m),\quad m\in M.
 \] 
 For a complex function $x$ with domain $M$ we define the {\it expander to $G$} as
\[
(\uparrow_M x)(n)= \begin{cases}x(n),& n\in M\\
0,& n\notin M.\end{cases}
\] 
Throughout this paper we consider the {\em $K$-channel filter bank} represented in Fig. \ref{f1}, i.e.,
 \[
 c_k=\big( \downarrow_M (x\ast h_k)\big)\quad \text{and}\quad 
 y= \sum_{k=1}^K (\uparrow_M c_k) \ast g_k\,,
 \]
 where $h_k$, $k=1, 2, \dots, K$, are the {\em analysis filters}, and  $g_k$, $k=1, 2, \dots, K$, are the {\em synthesis filters}. 
 Equivalently, we have the input-output expression:
 \begin{equation}
\label{y}
y(n)=\sum_{k=1}^K \sum_{m\in M} (x \ast h_k) (m) \, g_k(n-m),\quad  n\in G\,.
\end{equation}
In the sequel, we assume that  the filters $h_k,g_k\in \ell^1(G)$,\, for $k\in\mathcal{K}$, where for notational ease we denote $\mathcal{K}:=\{1, 2, \ldots,K\}$. This assumption guarantees the convergence of series involved in \eqref{y} for any $x\in \ell^2(G)$. Indeed, from \eqref{acotacion} we have that $x\ast h_k \in \ell^2(G)$ and then, again from \eqref{acotacion}, we have that the series in \eqref{y} converges absolutely and $y \in \ell^2(G)$.

 \medskip
  
 \begin{figure}
\begin{center}
\includegraphics[bb=5cm 23.4cm 12cm 25.6cm]
{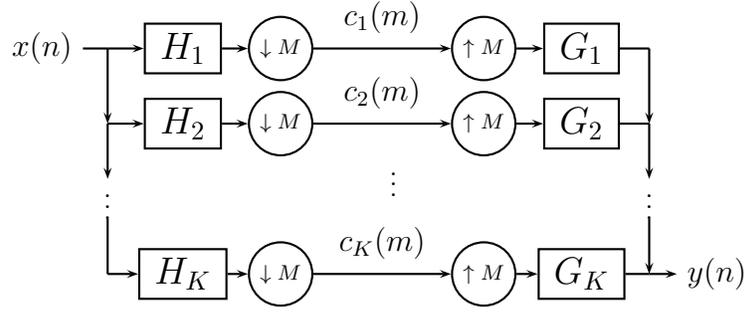}
\vspace{1cm}

 \caption{Scheme for a K-channel filter bank}\label{f1}
\end{center}
\end{figure}

\subsection{Polyphase analysis}
For $k=1, 2, \dots, K$ and $\ell \in \mathcal{L}$ we define the {\em polyphase components} of $x$, $h_k$, $g_k$ and $y$ as
\[
\begin{split}
& x_{\ell}(m):=x(m+\ell),\quad y_{\ell}(m):=y(m+\ell),\\& h_{k,\ell}(m):=h_k(m-\ell), \quad g_{\ell,k}(m):=g_k(m+\ell), \quad m\in M,
\end{split}
\] 
and we denote their $M$-Fourier transforms
by
$
X_{\ell}(\gamma), Y_{\ell}(\gamma), H_{k,\ell}(\gamma), G_{\ell,k}(\gamma)$ respectively. 

For any $x\in \ell^2(G)$ and $m\in M$, we have
\[
\begin{split}
c_k(m)&=(x \ast h_k) (m) = \sum_{n\in G} x(n)\, h_k(m-n)=
\sum_{\ell\in \mathcal{L}}\sum_{n\in M} x(n+\ell)\, h_k(m-n-\ell)\\&=
\sum_{\ell\in \mathcal{L}}\sum_{n\in M} x_{\ell}(n)\, h_{k,\ell}(m-n)= \sum_{\ell\in \mathcal{L}} (x_{\ell} \ast_{_M}  h_{k,\ell}) (m).
\end{split}
\]
All the series above converge absolutely since we have assumed that $h_k\in \ell^1(G)$. Moreover, $\mathrm{c}_k \in \ell^2(M)$ since $x \ast h_k\in \ell^2(G)$. Taking the $M$-Fourier transform, we obtain
\begin{equation}
\label{C}
C_k(\gamma)=  \sum_{\ell\in \mathcal{L}}  H_{k,\ell}(\gamma)X_\ell(\gamma).
\end{equation}
Thus, we have the matrix expression
\begin{equation}
\label{analisis}
\mathbf{C}(\gamma)= \mathbf{H}(\gamma) \mathbf{X}(\gamma)\quad \text{a.e.}\,\, \gamma \in \widehat{M},
\end{equation}
where 
\begin{equation}
\label{matriz polifase analisis}
\mathbf{C}(\gamma)=[C_k(\gamma)]_{k\in \mathcal{K}}\, ,\quad
\mathbf{X}(\gamma)=[X_\ell(\gamma)]_{\ell\in \mathcal{L}}\, ,\quad
\mathbf{H}(\gamma)=\big[H_{k,\ell}(\gamma)\big]_{k\in \mathcal{K},\,\ell\in \mathcal{L}}\, .
\end{equation}
\noindent Above, $\mathbf{C}(\gamma)$ and $\mathbf{X}(\gamma)$ denote  column vectors, i.e., $\mathbf{C}(\gamma)=[
C_{1}(\gamma),\ldots,C_{K}(\gamma)
]^\top$ and $\mathbf{X}(\gamma)=[
X_{\ell_0}(\gamma),\ldots ,X_{\ell_{L-1}(\gamma)}
]^\top$, and $\mathbf{H}(\gamma)$ is a $K\times L$ matrix.

The polyphase components of the output $y$ can be written as
\[
\begin{split}
y_{\ell}(m)&=y(m+\ell)=\sum_{k=1}^K \sum_{n\in M} c_k(n) g_k(m+\ell-n)
\\&=\sum_{k=1}^K \sum_{n\in M} c_k(n)\, g_{\ell,k}(m-n)= \sum_{k=1}^K (c_k \ast_{_M} g_{\ell,k})(m).
\end{split}
\]
Taking the $M$-Fourier transform, we obtain 
\[
Y_\ell(\gamma)= \sum_{k=1}^K G_{\ell,k}(\gamma) C_k(\gamma)\qquad \text{a.e.}\,\, \gamma \in \widehat{M},
\]
which can be written as
\begin{equation}
\label{sintesis}
\mathbf{Y}(\gamma)= \mathbf{G}(\gamma) \mathbf{C}(\gamma)\quad \text{a.e.}\,\, \gamma \in \widehat{M},
\end{equation}
where
\begin{equation}
\label{matriz polifase sintesis}
\mathbf{Y}(\gamma)=[Y_\ell(\gamma)]_{\ell\in \mathcal{L}}\, ,\quad
\mathbf{G}(\gamma)=\big[G_{\ell,k}(\gamma)\big]_{\ell\in \mathcal{L}, \, k\in \mathcal{K}}\,.
\end{equation}
Thus, from \eqref{analisis} and \eqref{sintesis}, we have
\begin{equation}
\label{analisis sintesis polifase}
\mathbf{Y}(\gamma)\, =  \,\mathbf{G}(\gamma)\, \mathbf{H}(\gamma) \,\mathbf{X}(\gamma)\quad \text{a.e.}\,\, \gamma \in \widehat{M}.
\end{equation}
On the other hand, we consider in the following proposition a generalization to discrete groups
of the {\em polyphase transform}:

\medskip
\begin{pro}
\label{polifase transform} 
The polyphase transform $\mathcal{P}:\ell^2(G) \rightarrow L^2(\widehat{M})\times\cdots \times  L^2(\widehat{M})\quad (L\,\,  \text{times}) $ defined by $\mathcal{P}(x):=\mathbf{X}= [X_\ell]_{\ell\in \mathcal{L}},$ is a unitary operator.
\end{pro}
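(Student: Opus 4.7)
The plan is to establish two things: that $\mathcal{P}$ is an isometry, and that it is surjective. Since the target is a Hilbert space and $\mathcal{P}$ is clearly linear, these two properties together yield unitarity (isometry alone gives injectivity and, via polarization, preservation of the inner product).

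For the isometry step, I would exploit the partition $G = \bigsqcup_{\ell \in \mathcal{L}} (\ell + M)$. Given $x \in \ell^2(G)$, summing $|x(n)|^2$ by cosets gives
\begin{equation*}
\|x\|_{\ell^2(G)}^2 \,=\, \sum_{\ell \in \mathcal{L}}\sum_{m\in M} |x(m+\ell)|^2 \,=\, \sum_{\ell\in \mathcal{L}} \|x_\ell\|_{\ell^2(M)}^2,
\end{equation*}
which in particular shows that each polyphase component $x_\ell$ lies in $\ell^2(M)$. Then I would invoke the Plancherel theorem for the countable discrete abelian group $M$ (recalled in Section \ref{section2}, which states that the Fourier transform extends to a unitary operator $\ell^2(M) \to L^2(\widehat{M})$) to conclude $\|x_\ell\|_{\ell^2(M)} = \|X_\ell\|_{L^2(\widehat{M})}$. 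Summing over $\ell \in \mathcal{L}$ gives
\begin{equation*}
\|\mathcal{P}(x)\|^2 \,=\, \sum_{\ell \in \mathcal{L}} \|X_\ell\|_{L^2(\widehat{M})}^2 \,=\, \|x\|_{\ell^2(G)}^2,
\end{equation*}
so $\mathcal{P}$ is an isometry.

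For surjectivity, given any $\mathbf{Y}=[Y_\ell]_{\ell\in \mathcal{L}} \in L^2(\widehat{M})^L$, I would apply the inverse Plancherel transform on $M$ to each $Y_\ell$ to recover $y_\ell \in \ell^2(M)$ with $\|y_\ell\|_{\ell^2(M)} = \|Y_\ell\|_{L^2(\widehat{M})}$. Using the coset decomposition, define $x \in \mathbb{C}^G$ by setting $x(m+\ell) := y_\ell(m)$ for $m\in M$, $\ell \in \mathcal{L}$; this is well defined because the representation $n = m+\ell$ is unique. Then
\begin{equation*}
\sum_{n\in G}|x(n)|^2 \,=\, \sum_{\ell\in\mathcal{L}} \|y_\ell\|_{\ell^2(M)}^2 \,=\, \sum_{\ell\in \mathcal{L}}\|Y_\ell\|_{L^2(\widehat{M})}^2 \,<\, \infty,
\end{equation*}
so $x \in \ell^2(G)$, and clearly $\mathcal{P}(x) = \mathbf{Y}$.

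I do not expect a serious obstacle here: the argument is essentially the observation that $\ell^2(G)$ decomposes as the orthogonal sum $\bigoplus_{\ell\in \mathcal{L}} \ell^2(\ell+M)$ and each summand is isometrically identified with $\ell^2(M)$ by translation, after which Plancherel on $M$ does the rest. The only point requiring a little care is the transition between $\ell^1(M)\cap\ell^2(M)$ (where the $M$-Fourier transform was originally defined pointwise) and general $x_\ell \in \ell^2(M)$, which is handled by the unitary extension provided by Plancherel on $M$ as recalled in Section \ref{section2}.
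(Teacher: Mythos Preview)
Your proof is correct and follows essentially the same route as the paper: both use the coset decomposition $G=\bigsqcup_{\ell\in\mathcal{L}}(\ell+M)$ together with Plancherel on $M$ for the isometry, and the surjectivity of the $M$-Fourier transform for the onto part. The only cosmetic difference is that the paper verifies inner-product preservation directly rather than checking norms and invoking polarization.
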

\begin{proof}
For $x,y\in \ell^2(G)$ we have
\[
\begin{split}
\langle x,y\rangle_{\ell^2(G)} &= \sum_{\ell\in \mathcal{L}} \sum_{m\in M} x(m+\ell)\overline{y(m+\ell)}=
\sum_{\ell\in \mathcal{L}} \sum_{m\in M} x_{\ell}(m)\overline{y_{\ell}(m)}= \sum_{\ell\in \mathcal{L}} \langle x_{\ell},y_{\ell}\rangle_{\ell^2(M)}\\
&=\sum_{\ell\in \mathcal{L}} \langle X_\ell,Y_\ell\rangle_{L^2(\widehat{M})}=  \langle \mathbf{X},\mathbf{Y}\rangle_{L^2(\widehat{M})\times\cdots \times  L^2(\widehat{M})}.
\end{split}
\]
Then $\mathcal{P}$ is a isometry.
Besides, for any $\mathbf{X}\in L^2(\widehat{M})\times\cdots \times  L^2(\widehat{M})$, since the $M$-Fourier transform is a surjective isometry between  $\ell^2(M)$ and $L^2(\widehat{M})$, there exists a function  $x$ such that its polyphase  components  $[X_{\ell}]_{\ell\in \mathcal{L}}$ coincides with $\mathbf{X}$. Hence,  
$\mathcal{P}$ is surjective.
\end{proof}

\medskip

By using Proposition \ref{polifase transform}, from \eqref{analisis sintesis polifase}
we easily deduce:
\medskip

\begin{theorem} 
\label{PR} 
The filter bank defined by \eqref{y} satisfies the perfect reconstruction property, i.e, $y=x$ for all $x\in \ell^2(G)$ if and only if
$\mathbf{G}(\gamma) \,\mathbf{H}(\gamma) =\mathbf{I}_L$\, for all  $\gamma\in \widehat{M}$, where $\mathbf{I}_L$ denotes the identity matrix of order $L$.
\end{theorem}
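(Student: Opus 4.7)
The proof should be a short deduction from the two ingredients already on the table: the polyphase identity $\mathbf{Y}(\gamma)=\mathbf{G}(\gamma)\mathbf{H}(\gamma)\mathbf{X}(\gamma)$ from \eqref{analisis sintesis polifase}, and the fact that the polyphase transform $\mathcal{P}$ is a unitary (hence bijective isometry) from $\ell^2(G)$ onto $L^2(\widehat{M})^{L}$ established in Proposition \ref{polifase transform}. The plan is to translate the equality $y=x$ in $\ell^2(G)$ into the vector equality $\mathbf{Y}=\mathbf{X}$ in $L^2(\widehat{M})^{L}$, and then strip off the input $\mathbf{X}$ using surjectivity of $\mathcal{P}$.

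For the sufficient direction, assume $\mathbf{G}(\gamma)\mathbf{H}(\gamma)=\mathbf{I}_L$ for all $\gamma\in\widehat{M}$. Then \eqref{analisis sintesis polifase} gives $\mathbf{Y}(\gamma)=\mathbf{X}(\gamma)$ almost everywhere, i.e.\ $\mathcal{P}(y)=\mathcal{P}(x)$ in $L^2(\widehat{M})^{L}$. Since $\mathcal{P}$ is an isometry it is injective, so $y=x$ in $\ell^2(G)$. This direction is essentially automatic once Proposition \ref{polifase transform} is in hand.

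For the necessary direction, suppose perfect reconstruction holds. Applying $\mathcal{P}$ to both sides of $y=x$ gives $\mathbf{Y}=\mathbf{X}$ in $L^2(\widehat{M})^{L}$, and combined with \eqref{analisis sintesis polifase} this yields
\[
\bigl(\mathbf{G}(\gamma)\mathbf{H}(\gamma)-\mathbf{I}_L\bigr)\mathbf{X}(\gamma)=\mathbf{0}\quad\text{a.e.\ }\gamma\in\widehat{M},
\]
for every $x\in\ell^2(G)$. Because $\mathcal{P}$ is surjective, $\mathbf{X}$ ranges over all of $L^2(\widehat{M})^{L}$; in particular, for each fixed $j\in\{0,\ldots,L-1\}$ we may take the constant vector $\mathbf{X}(\gamma)\equiv \mathbf{e}_j$ (recall $\widehat{M}$ has finite Haar measure), which forces the $j$-th column of $\mathbf{G}(\gamma)\mathbf{H}(\gamma)$ to equal $\mathbf{e}_j$ a.e. Hence $\mathbf{G}(\gamma)\mathbf{H}(\gamma)=\mathbf{I}_L$ a.e.\ on $\widehat{M}$.

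The only subtle point — and the one place where we need to think rather than quote — is upgrading the conclusion from ``almost every $\gamma$'' to ``every $\gamma$'' as the statement requires. This follows from continuity: since $h_k,g_k\in\ell^1(G)$, each polyphase component $h_{k,\ell}$ and $g_{\ell,k}$ lies in $\ell^1(M)$, so its $M$-Fourier transform is continuous on $\widehat{M}$; thus every entry of the matrix product $\mathbf{G}(\gamma)\mathbf{H}(\gamma)$ is continuous, and an a.e.\ identity between continuous functions on $\widehat{M}$ holds pointwise. With this observation the theorem is proved.
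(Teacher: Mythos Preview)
Your proof is correct and follows essentially the same approach as the paper: use Proposition~\ref{polifase transform} together with \eqref{analisis sintesis polifase} to reduce perfect reconstruction to $\mathbf{G}(\gamma)\mathbf{H}(\gamma)=\mathbf{I}_L$ a.e., and then invoke continuity of the entries (coming from $h_k,g_k\in\ell^1(G)$) to upgrade to equality at every $\gamma$. Your explicit choice $\mathbf{X}\equiv\mathbf{e}_j$ spells out a detail the paper leaves implicit, but the argument is otherwise identical.
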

\begin{proof}
Having in mind Proposition \ref{polifase transform}  and \eqref{analisis sintesis polifase} the filter bank satisfies the perfect reconstruction property if and only if $\mathbf{G}(\gamma)\mathbf{H}(\gamma)=\mathbf{I}_L$ a.e. $\gamma \in \widehat{M}$. Since we have assume that $h_k$ and $g_k$ belong to $\ell^1(G)$, their polyphase components, 
$h_{k,\ell}$ and $g_{\ell,k}$ belong to $L^1(M)$. Then their $M$-Fourier transform are continuous \cite[Proposition 4.13]{folland:95}. Hence, the entries of $\mathbf{G}(\gamma)\mathbf{H}(\gamma)$ are continuous. Therefore, 
$\mathbf{G}(\gamma)\mathbf{H}(\gamma)=\mathbf{I}_L$ a.e. $\gamma \in \widehat{M}$ if and only if 
$\mathbf{G}(\gamma)\mathbf{H}(\gamma) =\mathbf{I}_L$  for all $\gamma\in \widehat{M}$.
\end{proof}

\medskip

It is easy to check that between the polyphase transform and the Fourier transform, there exists the relationship
\[
X(\xi)= \mathbf{p}^\top(\xi) \mathbf{X}(\xi+ M^\perp),\,\, \xi \in \widehat{G}, \quad \text{where}\quad \mathbf{p}(\xi)= \big[ \, \overline{\langle \ell,\xi \rangle}\, \big]_{\ell\in \mathcal{L}}\,.
\]
Then, from \eqref{analisis sintesis polifase}  the Fourier transform of the output $y$ is expressed as
\[
Y(\xi)= \mathbf{p}^\top(\xi) \,\mathbf{G}(\xi+ M^\perp) \mathbf{H}(\xi+ M^\perp) \,\,\mathbf{X}(\xi + M^\perp)\quad \text{a.e.}\,\, \xi\in  \widehat{G}.
\]
\subsection{Frame analysis}

For $m\in M$, we denote the {\em translation operator by $m$} as  $(T_m f)(n):=f(n-m)$, $n\in G$, and the {\em involution of $f$} as $\widetilde{f}(n):= \overline{f(-n)}$, $n\in G$. Then, for $k=1, 2, \dots, K$
\begin{equation}
\label{escalar convolucion}
c_k(m):=(x \ast h_k) (m)= \sum_{n\in G} x(n) \,h_k(m-n)=  \langle x, T_m \widetilde{h}_k \rangle_{\ell^2(G)},\quad m\in G\,,
\end{equation}
and if, for notational ease, we denote $f_k:=\widetilde{h}_k$, $k=1, 2, \dots, K$, the expansion \eqref{y} representing the filter bank can be written as
\[
y= \sum_{k=1}^K \sum_{m\in M} \langle x, T_m f_k \rangle_{\ell^2(G)}\,\, T_m g_k\,.
\]
Thus, the filter bank in Fig.~\ref{f1} is related to the sequences  $\big\{T_m f_k\big\}_{k\in\mathcal{K},\, m\in M}$ and
$\big\{T_m g_k\big\}_{k\in\mathcal{K},\, m\in M}$.  The following results provide the frame properties of these sequences. In Ref.~\cite{ole:03} the reader can find the main properties of frames and Riesz bases. Recall that we have assumed that  $h_k\in \ell^1(G)$ which is equivalent to assume that $f_k\in \ell^1(G)$. 

\begin{theorem}
\label{dual frames}  
The sequences $\big\{T_m f_k\big\}_{k\in\mathcal{K}, \,m\in M}$ and
$\big\{T_m g_k\big\}_{k\in\mathcal{K}, \,m\in M}$ are dual frames for $\ell^2(G)$ if and only if $\,\mathbf{G}(\gamma)\, \mathbf{H}(\gamma)=\mathbf{I}_L$ for all $\gamma \in \widehat{M}$.
\end{theorem}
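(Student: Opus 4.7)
The plan is to reduce the theorem to Theorem \ref{PR} plus a Bessel estimate that comes directly out of the polyphase representation. The key observation from \eqref{escalar convolucion} is that $\langle x,T_m f_k\rangle_{\ell^2(G)} = c_k(m)$ for $m\in M$, so the filter-bank reconstruction identity $y=x$ is literally the dual-frame identity
\[
x \;=\; \sum_{k\in\mathcal{K}}\sum_{m\in M} \langle x,T_m f_k\rangle_{\ell^2(G)}\, T_m g_k.
\]
Thus Theorem \ref{PR} already characterizes this formula by $\mathbf{G}(\gamma)\mathbf{H}(\gamma)=\mathbf{I}_L$ for every $\gamma\in\widehat M$. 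What remains is to produce the two frame inequalities.

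First I would establish that $\{T_m f_k\}_{k\in\mathcal{K},m\in M}$ and $\{T_m g_k\}_{k\in\mathcal{K},m\in M}$ are Bessel sequences in $\ell^2(G)$. Since $h_k,g_k\in\ell^1(G)$, all polyphase components $h_{k,\ell}$ and $g_{\ell,k}$ lie in $\ell^1(M)$, so (as recalled in the proof of Theorem \ref{PR}) their $M$-Fourier transforms are continuous on the compact group $\widehat M$. Consequently the matrix-valued functions $\mathbf{H}(\gamma)$ and $\mathbf{G}(\gamma)$ have finite essential supremum in operator norm. Combining \eqref{analisis} with the unitarity of $\mathcal{P}$ established in Proposition \ref{polifase transform} yields
\[
\sum_{k\in\mathcal{K}}\sum_{m\in M}\bigl|\langle x,T_m f_k\rangle\bigr|^2 \;=\; \|\mathbf{C}\|_{L^2(\widehat M)^K}^2 \;\le\; \Bigl(\sup_{\gamma}\|\mathbf{H}(\gamma)\|^2\Bigr)\,\|x\|_{\ell^2(G)}^2,
\]
which is the Bessel bound for the analysis system; the argument for $\{T_m g_k\}$ is identical using \eqref{sintesis}.

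With the Bessel property in hand, the two directions are now short. For the ``if'' direction, assume $\mathbf{G}(\gamma)\mathbf{H}(\gamma)=\mathbf{I}_L$ on $\widehat M$. Theorem \ref{PR} gives the reconstruction formula above for every $x\in\ell^2(G)$, and a standard lemma (see \cite{ole:03}) says that two Bessel sequences satisfying such a reconstruction identity are automatically dual frames; the lower frame bounds are extracted via Cauchy--Schwarz: $\|x\|^2 = \sum_{k,m}\langle x,T_m f_k\rangle\overline{\langle x,T_m g_k\rangle} \le \bigl(\sum|\langle x,T_m f_k\rangle|^2\bigr)^{1/2}\sqrt{B_g}\,\|x\|$, and symmetrically. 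For the ``only if'' direction, dual frames satisfy the reconstruction formula for every $x$, whence Theorem \ref{PR} forces $\mathbf{G}(\gamma)\mathbf{H}(\gamma)=\mathbf{I}_L$ pointwise.

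The only place where a little work is genuinely required is the Bessel estimate; everything else is an immediate repackaging of the polyphase machinery already developed and Theorem \ref{PR}. I expect the paper's proof to follow essentially these steps, perhaps stating explicit frame bounds in terms of $\sup_\gamma\|\mathbf{H}(\gamma)\|$ and $\sup_\gamma\|\mathbf{G}(\gamma)\|$.
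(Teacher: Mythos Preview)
Your proposal is correct and follows essentially the same architecture as the paper: establish that both systems are Bessel, then invoke Theorem \ref{PR} together with the standard ``two Bessel sequences with the reconstruction identity are dual frames'' lemma from \cite{ole:03}. The only minor difference is in the Bessel estimate: rather than going through the polyphase representation and $\sup_\gamma\|\mathbf{H}(\gamma)\|$, the paper obtains it directly from Young's inequality \eqref{acotacion}, bounding $\sum_{m\in M}|\langle x,T_m f_k\rangle|^2 \le \|x\ast h_k\|_2^2 \le \|h_k\|_1^2\,\|x\|_2^2$.
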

\begin{proof}
By using \eqref{escalar convolucion} and \eqref{acotacion}, for each $k=1, 2, \dots, K$ we obtain that
\[
\begin{split}
&\sum_{m\in M}|\langle x,T_m f_k \rangle|^2\le \sum_{n\in G} |\langle x,T_n f_k \rangle|^2=\sum_{n\in G} |x \ast h_k (n)|^2\\&=  \|x \ast h_k\|^2_{2}\le   \|x\|_{2}^2  \|h_k\|^2_{1}\,, \,\, \text{for all $x\in \ell^2(G)$}\,.
\end{split}
\]
Hence, $\big\{T_m f_k\big\}_{k\in \mathcal{K}, m\in M}$ is a Bessel sequence for $\ell^2(G)$. Analogously one proves that the sequence
$\big\{T_m g_k\big\}_{k\in \mathcal{K}, m\in M}$ is a Bessel sequence for $\ell^2(G)$. Having in mind Lemma 5.6.2 in \cite{ole:03}, the result is now a consequence of Theorem \ref{PR}.
\end{proof}

\medskip

Let $\mathbf{H}^*(\gamma)$  denote the transpose conjugate of the matrix $\mathbf{H}(\gamma)$. 

\medskip

\begin{theorem}\label{cotas} The sequence
$\big\{T_m f_k\big\}_{k\in\mathcal{K},\,m\in M}$ is a frame for $\ell^2(G)$ if and only if Rank $\mathbf{H}(\gamma)=L$ for all $\gamma \in \widehat{M}$.
In this case, the optimal frame bounds are
\[A= \min_{\gamma\in \widehat{M}}\big[ \lambda_{\min}(\gamma)\big]\quadÊ\text{and}\quad B= \max_{\gamma\in \widehat{M}}\big[\lambda_{\max}(\gamma)\big]\,,
\]
where $\lambda_{\min}(\gamma)$  and $\lambda_{\max}(\gamma)$ are  the smallest and the largest eigenvalue of the matrix $\mathbf{H}^*(\gamma)\mathbf{H}(\gamma)$. 

In case $\big\{T_m f_k\big\}_{k\in\mathcal{K},\,m\in M}$is a frame for $\ell^2(G)$, its canonical dual frame is 
$\big\{T_m \overline{f}_k\big\}_{k\in\mathcal{K},\,m\in M}$ where $\overline{f}_k=\mathcal{P}^{-1} (\mathbf{H}^*\mathbf{H})^{-1}\mathcal{P} f_k$,\, $k=1, 2, \dots, K$, where $\mathcal{P}$ denotes the polyphase transform in Prop. \ref{polifase transform}.
\end{theorem}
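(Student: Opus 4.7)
The plan is to push the frame inequalities through the unitary polyphase transform $\mathcal{P}$ of Proposition \ref{polifase transform}, turning the problem into a pointwise spectral statement about the Hermitian matrix-valued symbol $\mathbf{H}^{*}(\gamma)\mathbf{H}(\gamma)$, and then to read off the rank criterion, the optimal bounds, and the canonical dual from that symbol.

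First I would use \eqref{escalar convolucion} together with Plancherel on the subgroup $M$ to get $\sum_{m\in M}|\langle x,T_m f_k\rangle|^{2}=\|c_k\|_{\ell^{2}(M)}^{2}=\|C_k\|_{L^{2}(\widehat{M})}^{2}$. Summing over $k\in\mathcal{K}$, inserting $\mathbf{C}(\gamma)=\mathbf{H}(\gamma)\mathbf{X}(\gamma)$ from \eqref{analisis}, and using Proposition \ref{polifase transform} to write $\|x\|_{2}^{2}=\int_{\widehat{M}}\|\mathbf{X}(\gamma)\|^{2}d\gamma$, the frame inequality for $\{T_m f_k\}$ becomes equivalent to
\[
A\int_{\widehat{M}}\|\mathbf{X}(\gamma)\|^{2}d\gamma\;\le\;\int_{\widehat{M}}\mathbf{X}^{*}(\gamma)\mathbf{H}^{*}(\gamma)\mathbf{H}(\gamma)\mathbf{X}(\gamma)\,d\gamma\;\le\;B\int_{\widehat{M}}\|\mathbf{X}(\gamma)\|^{2}d\gamma
\]
for every $\mathbf{X}\in L^{2}(\widehat{M})\times\cdots\times L^{2}(\widehat{M})$, since $\mathcal{P}$ is onto.

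The step I expect to be most delicate is the passage from this integrated inequality to the pointwise matrix inequality $A\,\mathbf{I}_{L}\le\mathbf{H}^{*}(\gamma)\mathbf{H}(\gamma)\le B\,\mathbf{I}_{L}$ a.e.\ on $\widehat{M}$. The routine argument is to localize by testing with vector fields of the form $\mathbf{X}(\gamma)=\chi_{E}(\gamma)\mathbf{v}$, $E\subset\widehat{M}$ Borel and $\mathbf{v}\in\mathbb{C}^{L}$ arbitrary, to force $A\le\lambda_{\min}(\gamma)$ and $\lambda_{\max}(\gamma)\le B$ a.e. Because $h_{k}\in\ell^{1}(G)$ implies $h_{k,\ell}\in\ell^{1}(M)$, the entries of $\mathbf{H}(\gamma)$ are continuous on the compact group $\widehat{M}$; hence $\lambda_{\min}$ and $\lambda_{\max}$ are continuous and attain their extrema, so the a.e.\ bounds become pointwise and the stated optimal expressions for $A$ and $B$ follow. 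The frame condition $A>0$ is then equivalent to $\lambda_{\min}(\gamma)>0$ for every $\gamma\in\widehat{M}$, i.e.\ to $\mathbf{H}^{*}(\gamma)\mathbf{H}(\gamma)$ being nonsingular at every $\gamma$, i.e.\ to $\mathrm{Rank}\,\mathbf{H}(\gamma)=L$ for all $\gamma\in\widehat{M}$.

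For the canonical dual, the same computation identifies the frame operator $S$ with $\mathcal{P}^{-1}$ composed with pointwise multiplication by $\mathbf{H}^{*}(\gamma)\mathbf{H}(\gamma)$ composed with $\mathcal{P}$; under the rank hypothesis its inverse is $S^{-1}=\mathcal{P}^{-1}\,(\mathbf{H}^{*}\mathbf{H})^{-1}\,\mathcal{P}$. The identity $\langle T_{m'}x,T_{m}f_k\rangle=\langle x,T_{m-m'}f_k\rangle$ implies $S T_{m'}=T_{m'}S$ for every $m'\in M$, whence $S^{-1}T_{m}f_k=T_{m}(S^{-1}f_k)$. Setting $\overline{f}_k:=S^{-1}f_k=\mathcal{P}^{-1}(\mathbf{H}^{*}\mathbf{H})^{-1}\mathcal{P}f_k$ yields the announced canonical dual frame $\{T_{m}\overline{f}_k\}_{k\in\mathcal{K},\,m\in M}$.
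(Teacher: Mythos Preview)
Your proposal is correct and follows essentially the same route as the paper: both push the frame sum through the unitary polyphase transform to obtain $\sum_{k,m}|\langle x,T_m f_k\rangle|^{2}=\int_{\widehat{M}}\mathbf{X}^{*}(\gamma)\mathbf{H}^{*}(\gamma)\mathbf{H}(\gamma)\mathbf{X}(\gamma)\,d\gamma$, invoke continuity of the entries of $\mathbf{H}$ (from $h_k\in\ell^{1}(G)$) together with compactness of $\widehat{M}$ to control $\lambda_{\min},\lambda_{\max}$, and identify $S=\mathcal{P}^{-1}(\mathbf{H}^{*}\mathbf{H})\mathcal{P}$, using $ST_m=T_mS$ for the canonical dual. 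The only cosmetic difference is in the optimality step: the paper tests with $\mathbf{X}$ equal to a $\gamma$-dependent unit eigenvector of $\mathbf{H}^{*}\mathbf{H}$ on a set where $\lambda_{\min}(\gamma)<J$, whereas you localize with constant vectors $\chi_E\mathbf{v}$; both arguments are standard and yield the same conclusion.
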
 
\begin{proof}
First,  notice that $\lambda_{\min}(\gamma)$  and $\lambda_{\max}(\gamma)$ have a minimum and a maximum value over $\widehat{M}$. Indeed,  since  $h_k\in \ell^1(G)$, the entries of $\mathbf{H}^*(\gamma)\mathbf{H}(\gamma)$ are continuous functions  \cite[Proposition 4.13]{folland:95} and then  $\lambda_{\min}(\gamma)$  and $\lambda_{\max}(\gamma)$ are real continuous functions  (see \cite{zedek:65}). Besides, since $M$ is discrete, $\widehat{M}$ is compact \cite[Proposition 4.4]{folland:95}.

In the proof of Theorem \ref{dual frames} we have showed that $\big\{T_m f_k\big\}_{k\in \mathcal{K}, m\in M}$ is a Bessel sequence. Now, we obtain a representation in the polyphase domain for its frame operator
\[
Sx =\sum_{k=1}^K \sum_{m\in M} \langle x, T_m f_k \rangle \,\, T_m f_k, \quad x\in \ell^2(G)\,.
\]
Indeed, when 
 $g_k(n)=f_k(n)=\overline{h_k(-n)}$, then
 $\mathbf{G}=\mathbf{H}^*$, and the representation \eqref{analisis sintesis polifase} reads
\[
\big[\mathcal{P}Sx\big](\gamma)=\mathbf{H}^*(\gamma)\mathbf{H}(\gamma) \mathbf{X}(\gamma).
\]
By using Proposition \ref{polifase transform}, we get
\[
\begin{split}
&\sum_{k=1}^K \sum_{m\in M} \big|\langle x, T_m f_k \rangle\big|^2= \langle Sx, x \rangle_{\ell^2(G)}
=\big\langle \mathcal{P}Sx, \mathcal{P}x \big\rangle_{L^2(\widehat{M})\times\ldots\times L^2(\widehat{M})}\\ &=
\int_{\widehat{M}} \mathbf{X}^*(\gamma)\big[\mathcal{P}Sx\big](\gamma)d\gamma = \int_{\widehat{M}} \mathbf{X}^*(\gamma) \mathbf{H}^*(\gamma)\mathbf{H}(\gamma)\mathbf{X}(\gamma)\, d\gamma.
\end{split}
\]
Hence,
\[
\begin{split}
&\sum_{k=1}^K \sum_{m\in M} \big|\langle x, T_m f_k \rangle\big|^2 \ge
\int_{\widehat{M}} \lambda_{\min}(\gamma) |\mathbf{X}(\gamma)|^2 d\gamma \ge
A \int_{\widehat{M}}  |\mathbf{X}(\gamma)|^2 d\gamma \\&=A \|\mathbf{X}\|^2_{L^2(\widehat{M})\times\ldots\times L^2(\widehat{M})} = A \|x\|^2_2.
\end{split}
\]
Let $J>A$;  there exists a subset  $\Omega \subset \widehat{M}$ with positive measure such that $\lambda_{\min}(\gamma)<J$ for $\gamma\in \Omega$. Let $\mathbf{X}(\gamma)$ be 
 equal to $0$ when $\gamma\notin \Omega$ and equal to a unitary eigenvector of $\mathbf{H}^*(\gamma)\mathbf{H}(\gamma)$ corresponding to  $\lambda_{\min}(\gamma)$ when $\gamma\in\Omega$. Notice that $\mathbf{X}\in L^2(\widehat{M})\times\ldots\times L^2(\widehat{M})$  since $\|\mathbb{F}\|^2_{L^2(\widehat{M})\times\ldots\times L^2(\widehat{M})}=$ measure$(\Omega) \le 1$. The function $x=\mathcal{P}^{-1}\mathbf{X}$ satisfies
\[
\sum_{k=1}^K \sum_{m\in M} \big|\langle x, T_m f_k \rangle\big|^2=
\int_{\Omega} \mathbf{X}^*(\gamma)\mathbf{H}^*(\gamma)\mathbf{H}(\gamma)\mathbf{X}(\gamma)d\gamma= \int_{\Omega} \lambda_{\min}(\gamma) \mathbf{X}^*(\gamma)\mathbf{X}(\gamma)d\gamma\le  J \|x\|^2\,.
\]
Therefore, the sequence $\big\{T_m f_k\big\}_{k\in \mathcal{K},m\in M}$ is a frame for $\ell^2(G)$ if and only if $A>0$, and in this case the lower optimal bound is $A$.
In the same way it can be proved that $B$ is the optimal Bessel bound. Since $\lambda_{\min}(\gamma)$ is a continuous function,  $A>0$ if and only if $\lambda_{\min}(\gamma)>0$ for all $\gamma\in \widehat{M}$ which is equivalent to be the rank of  $\mathbf{H}(\gamma)$ equal to $L$ for all $\gamma \in \widehat{M}$.

It is easy to check that
$ST_m x=T_m S x$. The  canonical dual frame is given by (see \cite[Lemma 5.1.1]{ole:03})
\[
S^{-1}T_mf_k= T_m S^{-1}f_k =T_m \mathcal{P}^{-1}\mathcal{P}S^{-1}f_k=T_m\mathcal{P}^{-1}(\mathbf{H}^*\mathbf{H})^{-1}\mathcal{P}f_k\,.
\]
\end{proof}

\medskip

The synthesis matrix $\mathbf{G}(\gamma)$ corresponding to the  canonical dual frame is $[\mathbf{H}(\gamma)^*\mathbf{H(\gamma)}]^{-1}\mathbf{H}^*(\gamma)$, which coincides with the Moore-Penrose pseudoinverse $\mathbf{H}^\dag(\gamma)$ of the analysis matrix $\mathbf{H}(\gamma)$.

Analogously, the optimal frame bounds of the dual frame $\big\{T_m g_k\big\}_{k\in\mathcal{K},\,m\in M}$ are 
given by $A_g= \min_{\gamma\in \widehat{M}}\big[ \mu_{\min}(\gamma)\big]$ and $B_g= \max_{\gamma\in \widehat{M}} \big[\mu_{\max}(\gamma)\big]$, where  
$\mu_{\min}(\gamma)$  and $\mu_{\max}(\gamma)$ are the smallest and the largest eigenvalues of  the matrix $\mathbf{G}(\gamma)\mathbf{G}^*(\gamma)$. For the canonical dual frame $g_k=\overline{f}_k$, we have that $A_g=1/B$   and $B_g=1/A$\, \cite[Lemma 5.1.1]{ole:03}.

The frame bounds give information about the stability of the filter bank.
Notice that, by its definition, the optimal frames bounds of $\big\{T_m f_k\big\}_{k\in\mathcal{K},m\in M}$ are   the tightest numbers $0<A\le B$ such that
\[
A \|x\|_2^2 \le \sum_{k=1}^K\sum_{m\in M} |c_k(m)|^2= \sum_{k=1}^K\sum_{m\in M} \big|(x\ast h_k)(m)\big|^2\le B \|x\|_2^2\,, \quad x\in \ell^2(G)\,.
\]
Thus $B$ gives a measure of how an error in the input $x$ of the analysis filter bank affects to subband signals $c_k$. For the synthesis, we have that $B_g$ is   the tightest number such that \cite[Theorem 3.2.3]{ole:03}
\[
\|y\|^2=\Big\|  \sum_{k=1}^K\sum_{m\in M} c_k(m) g_k(\cdot-m) \Big\|^2 \le B_g
\sum_{k=1}^K\sum_{m\in M} |c_k(m)|^2 \,.
\]
Thus $B_g$ gives a measure of how an error in the subband signals $c_k$ affects to the recovered signal $y$. The smallest possible value for $B_g$ is  $1/A$, which correspond to take the canonical dual frame. One can find a sensitivity analysis based on frame bounds in Ref. \cite{bol:98}; see also \cite[p.\,118]{ole:03}.

Having in mind that $A=B$ if and only if $\mathbf{H}^*(\gamma)\,\mathbf{H}(\gamma)= A \, \mathbf{I}_L$ for all $\gamma \in \widehat{M}$, we deduce: 

\medskip
\begin{cor}\label{tight} The sequence
$\big\{T_m f_k\big\}_{k\in\mathcal{K},\,m\in M}$ is a tight frame for $\ell^2(G)$ if and only if there exists $A>0$ such that $\mathbf{H}^*(\gamma)\,\mathbf{H}(\gamma)= A \, \mathbf{I}_L$ for all $\gamma \in \widehat{M}$.
In this case, the frame bound is $A$.
\end{cor}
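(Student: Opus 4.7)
The plan is to deduce the corollary directly from Theorem \ref{cotas}, which gives the optimal frame bounds as $A = \min_{\gamma \in \widehat{M}} \lambda_{\min}(\gamma)$ and $B = \max_{\gamma \in \widehat{M}} \lambda_{\max}(\gamma)$, where $\lambda_{\min}(\gamma)$ and $\lambda_{\max}(\gamma)$ are the extreme eigenvalues of the Hermitian matrix $\mathbf{H}^*(\gamma)\mathbf{H}(\gamma)$. Tightness is the condition $A = B$, so the statement reduces to a pointwise spectral assertion about $\mathbf{H}^*(\gamma)\mathbf{H}(\gamma)$.

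For the sufficiency, I would assume $\mathbf{H}^*(\gamma)\mathbf{H}(\gamma) = A\,\mathbf{I}_L$ with $A > 0$ for every $\gamma \in \widehat{M}$. Then $\lambda_{\min}(\gamma) = \lambda_{\max}(\gamma) = A$ identically on $\widehat{M}$, so $\mathbf{H}(\gamma)$ has full rank $L$ everywhere and Theorem \ref{cotas} applies, yielding a frame with optimal bounds both equal to $A$; hence the frame is tight with bound $A$.

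For the necessity, suppose $\{T_m f_k\}_{k\in\mathcal{K},\,m\in M}$ is a tight frame with bound $A$. By Theorem \ref{cotas}, $A = \min_{\gamma} \lambda_{\min}(\gamma) = \max_{\gamma} \lambda_{\max}(\gamma)$. For every fixed $\gamma \in \widehat{M}$, the first equality gives $\lambda_{\min}(\gamma) \ge A$ while the second gives $\lambda_{\max}(\gamma) \le A$. Combined with $\lambda_{\min}(\gamma) \le \lambda_{\max}(\gamma)$, all eigenvalues of $\mathbf{H}^*(\gamma)\mathbf{H}(\gamma)$ coincide and equal $A$. Since $\mathbf{H}^*(\gamma)\mathbf{H}(\gamma)$ is Hermitian, hence unitarily diagonalizable, this forces $\mathbf{H}^*(\gamma)\mathbf{H}(\gamma) = A\,\mathbf{I}_L$ for every $\gamma \in \widehat{M}$.

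I do not anticipate a serious obstacle: the whole argument is a one-line consequence of Theorem \ref{cotas} plus the elementary fact that a Hermitian matrix with a single eigenvalue is a scalar multiple of the identity. The only minor point to be careful about is to argue pointwise (using both the min and the max) rather than only in an integrated or almost-everywhere sense; continuity of the entries of $\mathbf{H}^*(\gamma)\mathbf{H}(\gamma)$, already established in the proof of Theorem \ref{PR}, makes the ``for all $\gamma$'' formulation unambiguous.
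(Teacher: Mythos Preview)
Your argument is correct and is exactly the paper's approach: the corollary is stated immediately after the one-line observation that $A=B$ if and only if $\mathbf{H}^*(\gamma)\,\mathbf{H}(\gamma)=A\,\mathbf{I}_L$ for all $\gamma\in\widehat{M}$, with no further proof given. You have simply spelled out this equivalence in detail, using the same reduction to Theorem~\ref{cotas}.
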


\medskip

For maximally decimated filter banks, i.e., whenever $L=K$, we have the following result:

\medskip

\begin{theorem}
\label{maximally decimated} Assume that $L=K$.  The sequence
$\big\{T_m f_k\big\}_{k\in\mathcal{K},\,m\in M}$ is Riesz basis for $\ell^2(G)$ if and only if  $\det \mathbf{H}(\gamma)\neq 0$ for all $\gamma \in \widehat{M}$.
In this case, the  optimal Riesz bounds are the constants $A$ and $B$ defined in Theorem~\ref{cotas}.
\end{theorem}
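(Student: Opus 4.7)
My plan is to leverage Theorem~\ref{cotas} and supplement it with the only extra ingredient needed for a Riesz basis, namely bijectivity of the analysis operator $T\colon\ell^2(G)\to \ell^2(\mathcal{K}\times M)$, $Tx=(\langle x,T_mf_k\rangle)_{k,m}$. Recall that $\{T_mf_k\}$ is a Riesz basis for $\ell^2(G)$ exactly when $T$ is a topological isomorphism onto $\ell^2(\mathcal{K}\times M)$; equivalently, when the synthesis operator $T^{*}$ is an isomorphism, for then $T^{*}$ carries the canonical orthonormal basis of $\ell^2(\mathcal{K}\times M)$ to $\{T_mf_k\}$.

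The first step is to realize $T$ in the polyphase domain. Composing with the unitary $\mathcal{P}$ of Proposition~\ref{polifase transform} on the source and with the componentwise $M$-Fourier transform on the target, identity~\eqref{analisis} shows that $T$ is unitarily equivalent to the pointwise multiplication operator $\widetilde T\colon L^2(\widehat{M})^L\to L^2(\widehat{M})^K$, $(\widetilde T \mathbf{X})(\gamma)=\mathbf{H}(\gamma)\mathbf{X}(\gamma)$. In the maximally decimated case $L=K$ both spaces coincide, so $\widetilde T$ is an endomorphism of $L^2(\widehat{M})^L$.

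For ($\Rightarrow$), a Riesz basis is in particular a frame, and Theorem~\ref{cotas} forces $\operatorname{rank}\mathbf{H}(\gamma)=L$ for every $\gamma\in\widehat{M}$, which for a square matrix is $\det\mathbf{H}(\gamma)\neq 0$. For ($\Leftarrow$), assume $\det\mathbf{H}(\gamma)\neq 0$ everywhere. Continuity of the entries of $\mathbf{H}$ (since $h_k\in\ell^1(G)$) together with compactness of $\widehat{M}$---exactly the ingredients already used in Theorem~\ref{cotas}---give a strictly positive minimum $A$ of $\lambda_{\min}(\gamma)$ and a finite maximum $B$ of $\lambda_{\max}(\gamma)$. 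Thus $A\,\mathbf{I}_L\le \mathbf{H}^{*}(\gamma)\mathbf{H}(\gamma)\le B\,\mathbf{I}_L$ pointwise, which yields the two-sided estimate $\sqrt{A}\,\|\mathbf{X}\|\le \|\widetilde T\mathbf{X}\|\le\sqrt{B}\,\|\mathbf{X}\|$ on $L^2(\widehat{M})^L$. Unwinding the unitary conjugation, $T$ is a topological isomorphism $\ell^2(G)\to\ell^2(\mathcal{K}\times M)$, so $\{T_mf_k\}$ is a Riesz basis.

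It remains to pin down the optimal Riesz bounds. Passing to the adjoint and using that $T$ is bijective, the operator inequalities $A\,I\le T^{*}T\le B\,I$ transfer to $A\,I\le TT^{*}\le B\,I$ with the \emph{same} extremal constants, simply because $T^{*}T$ and $TT^{*}$ are positive and similar on the range of $T$. Translating back via $\|T^{*}c\|^2=\langle TT^{*}c,c\rangle$, this reads $A\|c\|_2^2\le\bigl\|\sum_{k,m}c_{k,m}T_mf_k\bigr\|_2^2\le B\|c\|_2^2$ for every $c\in\ell^2(\mathcal{K}\times M)$, with optimal constants $A$ and $B$ as in Theorem~\ref{cotas}. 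The step I expect to need a little care is this transfer of optimal constants from the analysis side to the synthesis side: it genuinely uses the bijectivity of $T$, which is precisely what distinguishes the maximally decimated setting from the general frame situation.
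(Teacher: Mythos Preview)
Your argument is essentially correct and takes a somewhat different route from the paper for the converse direction. The paper first invokes Theorem~\ref{cotas} to obtain a frame and then upgrades to a Riesz basis by exhibiting a biorthogonal sequence: it defines $g_1,\dots,g_K$ as the inverse polyphase transforms of the rows of $\mathbf{H}^{-1}(\gamma)$ (using continuity of the entries and compactness of $\widehat{M}$ to ensure these rows lie in $L^2(\widehat{M})^L$) and verifies $\langle T_{m'}g_{k'},T_mf_k\rangle=\delta_{k,k'}\delta_{m,m'}$ by a direct polyphase computation. You instead realize the analysis operator as pointwise multiplication by $\mathbf{H}(\gamma)$ and argue it is an isomorphism. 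Your approach is more conceptual and bypasses the biorthogonality check; the paper's route has the side benefit of producing the dual Riesz basis explicitly.

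One step needs tightening. The two-sided estimate $\sqrt{A}\,\|\mathbf{X}\|\le\|\widetilde T\mathbf{X}\|\le\sqrt{B}\,\|\mathbf{X}\|$ by itself only shows that $\widetilde T$ is bounded, injective, and has closed range; it does not give surjectivity (a non-surjective isometry satisfies such an estimate). Here surjectivity does hold, but you should say why: since $\det\mathbf{H}(\gamma)\neq 0$ for every $\gamma$ and the smallest singular value of $\mathbf{H}(\gamma)$ is at least $\sqrt{A}$, one has $\|\mathbf{H}(\gamma)^{-1}\|\le A^{-1/2}$, so pointwise multiplication by $\mathbf{H}(\gamma)^{-1}$ is a bounded operator on $L^2(\widehat{M})^L$ and is the two-sided inverse of $\widetilde T$. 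With that sentence added, the isomorphism claim is justified.

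For the optimal Riesz bounds the paper simply cites the general fact (\cite[Theorem~5.4.1]{ole:03}) that optimal Riesz bounds coincide with optimal frame bounds. Your spectral argument, that for invertible $T$ one has $TT^{*}=T(T^{*}T)T^{-1}$ so $T^{*}T$ and $TT^{*}$ share the same spectrum, is a correct self-contained substitute and, as you note, makes the role of bijectivity explicit.
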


\begin{proof}
If the sequence $\big\{T_m f_k\big\}_{k\in \mathcal{K},m\in M}$ is a Riesz basis then it is a frame. Then, by Theorem~\ref{cotas}, Rank $\mathbf{H}(\gamma)=L=K$, and thus $\det \mathbf{H}(\gamma)\neq 0$, for all $\gamma \in \widehat{M}$.

To prove the reciprocal, assume that $\det \mathbf{H}(\gamma)\neq 0$, for all $\gamma \in \widehat{M}$.
Then  Rank $\mathbf{H}(\gamma)=L$ and, from Theorem \ref{cotas}, $\big\{T_m f_k\big\}_{k\in\mathcal{K},\,m\in M}$ is a frame. Thus, to prove that it is a Riesz basis  it only remains to prove that it has a biorthogonal sequence \cite[Theorem 6.1.1]{ole:03}. Notice that since $|\det \mathbf{H}(\gamma)|$ is continuous on the compact $\widehat{M}$, and $|\det \mathbf{H}(\gamma)|>0$ for all $\gamma \in \widehat{M}$, then there exists $J>0$ such that $|\det \mathbf{H}(\gamma)|>J$ for all $\gamma \in \widehat{M}$. Then the rows of $\mathbf{H}^{-1}(\gamma)$ belong to 
$L^2(\widehat{M})\times\ldots\times L^2(\widehat{M})$.
 We denote by $g_1,\ldots,g_k$ the inverse polyphase transform (see Proposition \ref{polifase transform})  of these rows. Thus $\mathbf{G}(\gamma)$ defined by \eqref{matriz polifase sintesis} is $\mathbf{G}(\gamma)=\mathbf{H}(\gamma)^{-1}$. 
From \eqref{C}, we obtain that the $M$-Fourier transform of $c_{k,k'}=\downarrow_M (g_{k'}\ast h_k)$ is 
\[
C_{k,k'}(\gamma)=\sum_{\ell\in \mathcal{L}} H_{k,\ell}(\gamma)G_{\ell,k'}(\gamma)\,.
\]
Since $\mathbf{G}(\gamma)=\mathbf{H}^{-1}(\gamma)$ we obtain that
$
C_{k,k'}(\gamma)=\delta_{k,k'}
$
Then, having in mind that the inverse $M$-Fourier transform of $C_{k,k}=1$ is the $\delta$ sequence, by using \eqref{escalar convolucion} we obtain
\[
\langle T_{m'}g_{k'},T_m f_k \rangle=
\langle g_{k'},T_{m-m'} f_k \rangle=
\big(g_{k'}\ast h_k\big)(m-m')=c_{k,k'}(m-m')= \delta_{k,k'}\delta_{m,m'}\,,
\]
which proves that the sequence $\big\{T_m f_k\big\}_{k\in\mathcal{K},\,m\in M}$ is a Riesz basis for $\ell^2(G)$. The optimal Riesz bounds are the optimal frame bounds  \cite[Theorem 5.4.1]{ole:03}, and then, from Theorem \ref{cotas}, they are $A$ and $B$.
\end{proof}

\subsection{Modulation Analysis}

Recall that $M^\perp$, the annihilator of $M$,  is a subgroup of $\widehat{G}$ with $L$ elements.

\medskip

\begin{pro} \label{decimacion} For any $x\in \ell^2(G)$, the $M$-Fourier transform of $(\downarrow_M x)$ is
\[(\downarrow_M x)^{\wedge}(\xi + M^\perp)=\dfrac{1}{L} \sum_{\eta\in M^\perp} X(\xi+ \eta),\quad \text{a.e.} \,\, \xi\in \widehat{G}.\] 
\end{pro}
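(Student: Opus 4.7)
The strategy is to first establish the identity for $x \in \ell^1(G)$ by a direct character-orthogonality computation, and then to extend the result to all of $\ell^2(G)$ by density. The algebraic ingredient driving the proof is the standard orthogonality relation
\[
\frac{1}{L}\sum_{\eta \in M^\perp} \langle n, \eta \rangle = \chi_M(n), \qquad n \in G,
\]
where $\chi_M$ is the characteristic function of $M$. This holds because $M^\perp$ is a finite subgroup of $\widehat{G}$ of order $L$ with $(M^\perp)^\perp = M$ (Pontryagin duality for the closed subgroup $M$); for $n\in M$ every term equals $1$, and for $n\notin M$ some $\eta_0 \in M^\perp$ has $\langle n, \eta_0\rangle \neq 1$, forcing the sum to vanish by the usual translation trick.

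For $x\in \ell^1(G)$, the plan is to unfold the definition of the $M$-Fourier transform of $\downarrow_M x$, insert the characteristic function to extend the sum from $M$ to $G$, substitute the orthogonality identity, and swap the two summations. Concretely, writing $\gamma = \xi + M^\perp$ and noting that $\langle m,\xi\rangle$ depends only on $\gamma$ for $m\in M$,
\[
(\downarrow_M x)^{\wedge}(\xi + M^\perp) = \sum_{m \in M} x(m)\,\overline{\langle m, \xi \rangle} = \sum_{n \in G} \chi_M(n)\, x(n)\, \overline{\langle n, \xi \rangle}.
\]
Replacing $\chi_M(n)$ by $\frac{1}{L}\sum_{\eta} \langle n,\eta\rangle$ and interchanging the two sums (legitimate by Fubini, since the absolute double sum is bounded by $L\|x\|_1$) converts the expression to $\frac{1}{L}\sum_{\eta} X(\xi - \eta)$, which equals $\frac{1}{L}\sum_{\eta} X(\xi + \eta)$ after reindexing $\eta \mapsto -\eta$ in the group $M^\perp$.

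To pass from $\ell^1(G)$ to $\ell^2(G)$, I take $x_k\in \ell^1(G)\cap \ell^2(G)$ with $x_k \to x$ in $\ell^2(G)$. The left-hand side converges in $L^2(\widehat{M})$ because $\downarrow_M$ is a contraction from $\ell^2(G)$ to $\ell^2(M)$ and the $M$-Fourier transform is an isometry. For the right-hand side, each translation operator $X\mapsto X(\cdot + \eta)$ is unitary on $L^2(\widehat{G})$, and upon identifying $\widehat{M}\cong \widehat{G}/M^\perp$ via a fundamental domain $D\subset \widehat{G}$, the map $X\mapsto \sum_{\eta\in M^\perp} X(\cdot+\eta)|_D$ is bounded from $L^2(\widehat{G})$ to $L^2(D)$, hence to $L^2(\widehat{M})$. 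Uniqueness of the limit in $L^2(\widehat{M})$ then gives the identity almost everywhere.

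The main obstacle is the measure-theoretic bookkeeping linking the Haar measures on $\widehat{G}$ and $\widehat{M}$: the normalizations $\mu(\widehat{G}) = \mu(\widehat{M}) = 1$ dictate that $d\gamma$ corresponds to $L\,d\xi$ restricted to a fundamental domain for $M^\perp$, and one must verify that the density argument in the previous step respects this scaling so that the prefactor $1/L$ in the statement is preserved rather than rescaled. Once that identification is in place, the extension from $\ell^1(G)$ to $\ell^2(G)$ and the equality almost everywhere in $\widehat{G}$ follow without further obstruction.
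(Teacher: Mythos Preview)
Your proof is correct and follows the same approach as the paper: both establish the character-orthogonality relation $\sum_{\eta\in M^\perp}\langle n,\eta\rangle = L\,\chi_M(n)$ and use it to rewrite the sum over $M$ as a sum over $G$ before interchanging summations. The paper's argument is terser---it carries out only the algebraic computation without separating the $\ell^1$ and $\ell^2$ cases or justifying the interchange---so your density step from $\ell^1(G)$ to $\ell^2(G)$ supplies rigor that the paper leaves implicit.
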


\begin{proof}
If $n\notin M$ we have that there exist $\eta_r\in M^\perp$ such that $\langle n ,\eta_r \rangle \neq 1$  \cite[Proposition 4.38]{folland:95}. Since $M^\perp$ is a group,
\[\sum_{\eta \in M^\perp}  \langle n,\eta \rangle= 
\sum_{\eta \in M^\perp}  \langle n,\eta+\eta_r \rangle=
 \langle n,\eta_r \rangle \sum_{\eta \in M^\perp}  \langle n,\eta \rangle.\]
Therefore
\begin{equation}\label{ortog}
\sum_{\eta \in M^\perp} \langle n,\eta \rangle =\begin{cases}L&n\in M\\0& n\notin M. \end{cases}
\end{equation}
By using this relationship, we obtain
\[
\begin{split}
\qquad \qquad \quad (\downarrow_M x)^{\wedge}(\xi+M^\perp)&= \sum_{m \in M} x(m) \overline{\langle m,\xi \rangle}=\dfrac{1}{L}
\sum_{n \in G} \sum_{\eta \in M^\perp} \overline{\langle n,\eta \rangle} x(n) \overline{\langle n,\xi \rangle}
\\&=\dfrac{1}{L}
\sum_{\eta \in M^\perp} \sum_{n \in G}  x(n) \overline{\langle n,\xi+\eta \rangle}=\dfrac{1}{L} \sum_{\eta \in M^\perp} \widehat{x}(\xi+\eta).
\end{split}
\]
\end{proof}

As a consequence of the above proposition, the $M$-Fourier transform of $c_k=\downarrow_M(x\ast h_k)$ is 
$
C_k(\xi + M^\perp)= \dfrac{1}{L} \sum_{\eta\in M^\perp} X(\xi+\eta) H_k(\xi+\eta)
$.
Hence, denoting $\mathbf{C}=\big[C_k \big]_{k\in \mathcal{K}}$, we have
\begin{equation}
\label{AC analisis}
 \mathbf{C}(\xi + M^\perp)^\top= \dfrac{1}{L}
\mathbf{H}_{\text{mod}}(\xi)\,\mathbf{x}_{\text{mod}}(\xi),
\end{equation}
where 
\begin{equation}
\label{AC matrix}
\mathbf{x}_{\text{mod}}(\xi):=\big[X(\xi+\eta)\big]_{\eta\in M^\perp}\quad\text{and}\quad \mathbf{H}_{\text{mod}}(\xi):=\big[ H_k(\xi+\eta)\big]_{k\in \mathcal{K},\, \eta\in M^\perp}.
\end{equation}
For any $c\in \ell^2(M)$, the Fourier transform of $(\uparrow_M c)$ is $M^\perp$-periodic; specifically, for any $\eta \in M^\perp$,
\[
(\uparrow_M c)^\wedge(\xi+ \eta)=\sum_{n\in G} [\uparrow_M c](n) \overline{\langle n,\xi+ \eta \rangle}
=\sum_{m\in M} c(m) \overline{\langle m,\xi + M^\perp \rangle}=C(\xi+ M^\perp).
\] 
Then the Fourier transform of
\[
y(n)=\sum_{k=1}^K\sum_{m\in M} c_k(m) g_k(n-m)= \sum_{k=1}^K \sum_{l\in G} (\uparrow_M c_k)(l) g_k(n-l)=
\sum_{k=1}^K \big((\uparrow_M c_k)\ast g_k\big)(n)
\]
is $Y(\xi)= \sum_{k=1}^K C_k(\xi+ M^\perp)^\top G_k(\xi)$. From \eqref{AC analisis}, the Fourier transform of the output $y$ to the filter bank in Fig. \ref{f1} is
\[
Y(\xi)= 
\dfrac{1}{L}  \big[G_1(\xi), G_2(\xi), \cdots, G_K(\xi)\big]\, \mathbf{H}_{\text{mod}}(\xi)\,\mathbf{x}_{\text{mod}}(\xi),\quad \xi \in \widehat{G}.
\]
This modulation representation of the output to the filter bank was obtained in \cite{Behmard:99}.

\medskip
\begin{pro}\label{ACpolifase}
The $K\times L$ matrices $\mathbf{H}_{\text{mod}}(\xi)$ and $\mathbf{H}(\xi)$, defined in \eqref{AC matrix} and \eqref{matriz polifase analisis} respectively, are related by
\[
\mathbf{H}_{\text{mod}}(\xi)= \,\mathbf{H}(\xi + M^\perp)  \, \mathbf{D}(\xi)\, \mathbf{W}\quad \text{for all}\, \,\, \xi\in \widehat{G},
\]
where
$
\mathbf{W}=\big[ \langle \ell_i, \eta \rangle \big]_{i=0,1,\ldots,L-1, \,\eta\in M^\perp}\,\,$ and $\,\,
\mathbf{D}(\xi)= \text{diag}\,\big(\langle \ell_0, \xi \rangle, \langle \ell_1, \xi \rangle ,\ldots,\langle \ell_{L-1}, \xi \rangle  \big)
$.
\end{pro}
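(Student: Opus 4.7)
The plan is to verify the identity entry‑by‑entry, by expanding $H_k(\xi+\eta)$ via the definition of the Fourier transform on $G$, splitting the sum along the coset decomposition $G=\bigsqcup_{\ell\in\mathcal{L}}(-\ell+M)$, and then recognising the result as the $(k,\eta)$ entry of the claimed matrix product.

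Concretely, for fixed $k\in\mathcal{K}$ and $\eta\in M^\perp$, I would start from
\[
H_k(\xi+\eta)=\sum_{n\in G} h_k(n)\,\overline{\langle n,\xi+\eta\rangle}
\]
and write every $n\in G$ uniquely as $n=m-\ell$ with $m\in M$ and $\ell\in\mathcal{L}$ (this is the substitution that matches the sign convention $h_{k,\ell}(m)=h_k(m-\ell)$ used in the definition of the polyphase components). The sum becomes a double sum over $\ell\in\mathcal{L}$ and $m\in M$ of $h_{k,\ell}(m)\,\overline{\langle m-\ell,\xi+\eta\rangle}$.

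Next I would manipulate the character: split $\overline{\langle m-\ell,\xi+\eta\rangle}=\overline{\langle m,\xi+\eta\rangle}\cdot\langle\ell,\xi+\eta\rangle$, use $\langle m,\eta\rangle=1$ because $m\in M$ and $\eta\in M^\perp$ to collapse $\overline{\langle m,\xi+\eta\rangle}$ to $\overline{\langle m,\xi\rangle}$, and bilinearly factor $\langle\ell,\xi+\eta\rangle=\langle\ell,\xi\rangle\langle\ell,\eta\rangle$. The inner sum over $m$ then becomes exactly the $M$‑Fourier transform $H_{k,\ell}(\xi+M^\perp)$, so one obtains
\[
H_k(\xi+\eta)=\sum_{\ell\in\mathcal{L}} H_{k,\ell}(\xi+M^\perp)\,\langle\ell,\xi\rangle\,\langle\ell,\eta\rangle.
\]

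Finally, I would read off the right–hand side of the proposition: since $\mathbf{D}(\xi)$ is diagonal with entries $\langle\ell,\xi\rangle$ and $\mathbf{W}$ has entries $\langle\ell,\eta\rangle$, the product $\mathbf{H}(\xi+M^\perp)\mathbf{D}(\xi)\mathbf{W}$ has $(k,\eta)$ entry precisely $\sum_{\ell}H_{k,\ell}(\xi+M^\perp)\langle\ell,\xi\rangle\langle\ell,\eta\rangle$, concluding the proof. There is no real obstacle here; the only point to be careful about is the bookkeeping of signs and conjugates in the character identities (in particular, the fact that $\overline{\langle -\ell,\cdot\rangle}=\langle\ell,\cdot\rangle$ and that the polyphase convention uses $m-\ell$ rather than $m+\ell$), which dictates that it is $\mathbf{D}(\xi)$ and $\mathbf{W}$ that appear rather than their conjugates.
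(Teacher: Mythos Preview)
Your argument is correct and is essentially the same as the paper's: both proofs decompose the sum defining $H_k$ along the cosets $m-\ell$ (with $m\in M$, $\ell\in\mathcal{L}$), factor the character to isolate $H_{k,\ell}(\xi+M^\perp)$, and then read off the matrix identity. The only cosmetic difference is that the paper first derives $H_k(\xi)=\sum_{\ell}\langle\ell,\xi\rangle H_{k,\ell}(\xi+M^\perp)$ and then substitutes $\xi\mapsto\xi+\eta$, whereas you carry $\eta$ along from the start.
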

\begin{proof}
We have
\[
\begin{split}
H_{k}(\xi)&=\sum_{n\in G} h_k(n)  \overline{\langle n, \xi \rangle} =
\sum_{\ell\in \mathcal{L}}\sum_{m\in M} h_k(m-\ell)  \overline{\langle m-\ell, \xi \rangle}\\&=
\sum_{\ell\in \mathcal{L}} \langle \ell, \xi \rangle \sum_{m\in M} h_k(m-\ell)  \overline{\langle m, \xi \rangle}
=
\sum_{\ell\in \mathcal{L}} \langle \ell, \xi \rangle H_{k,\ell}(\xi+M^\perp).
\end{split}
\]
Therefore, 
$H_{k}(\xi+\eta)=\sum_{\ell\in \mathcal{L}} \langle \ell, \xi \rangle H_{k,\ell}(\xi+M^\perp) \langle \ell, \eta \rangle$ for all $\xi \in \widehat{G}$, $\eta\in M^\perp$.
\end{proof}

\medskip
It is worth to note that $\mathbf{W}\mathbf{W}^*=L\, \mathbf{I}_L$ (see \eqref{ortog}); then
$\mathbf{H}(\xi+M^\top)=(1/L) \mathbf{H}_{\text{mod}}(\xi)\mathbf{W}^*\overline{\mathbf{D}(\xi)}$.

\section{Some illustrative examples}
\label{section4}
In this section we consider the filter bank depicted in Fig.~\ref{f1} for different choices of the group $G$ and the lattice $M$. Thus,
we particularize the  general theory in Section \ref{section3} in four different contexts:
\subsection{The case $G=\mathbb{Z}^d$ and $M=\{\mathbf{M} n: n\in \mathbb{Z}^d$\}}

Let $\mathbf{M}$ be  a $d\times d$ matrix with integer entries and positive determinant.  For the case
$G=\mathbb{Z}^d$ and $M=\{\mathbf{M} n: n\in \mathbb{Z}^d\}$, we could take as transversal $\mathcal{L}=-\mathcal{N}(\mathbf{M})$  where  
$\mathcal{N}(\mathbf{M}):= \mathbf{M}[0,1)^n \cap \mathbb{Z}^d$ which has $\det \mathbf{M}$ elements (see \cite{vaid:93}). We could also take 
$\mathcal{L}=\mathcal{N}(\mathbf{M}^\top)$ (see \cite{vetterli:92}), or even other possibilities (see \cite{viscito:91}). In the following corollary we write some of the results of Section \ref{section3} in terms of the $K\times \det \mathbf{M}$ and $\det \mathbf{M}\times K$ {\em polyphase matrices} usually used in this context \cite{vetterli:92,vaid:93,viscito:91}:
\[
\mathbf{E}(z)=\Big[\sum_{n\in \mathbb{Z}^d} h_k(\mathbf{M}n-\ell)z^{-n}\Big]_{k\in \mathcal{K},\, \ell\in \mathcal{L}},\quad
\mathbf{R}(z)=\Big[ \sum_{n\in \mathbb{Z}^d} g_k(\mathbf{M}n+\ell ) z^{-n}\Big]_{ \ell\in \mathcal{L} ,k\in \mathcal{K}},\quad z\in \mathbb{T}^d.
\]

\medskip

\begin{cor}\label{cotasZd}
Under the above circumstances, consider the filter bank described in Fig.~\ref{f1}. Let $\lambda_{\min}(z)$  and $\lambda_{\max}(z)$ be  the smallest and the largest eigenvalue of the $\det \mathbf{M}\times \det \mathbf{M}$ matrix $\mathbf{E}^*(z)\mathbf{E}(z)$. Then, the sequence
$\big\{T_m f_k\big\}_{k\in\mathcal{K},\,m\in M}$ is a frame for $\ell^2(\mathbb{Z}^d)$ if and only if Rank $\mathbf{E}(z)=\det \mathbf{M}$ for all $z \in \mathbb{T}^d$.
In this case, the optimal frame bounds are
\[
A= \min_{z \in \mathbb{T}^d} \big[\lambda_{\min}(z)\big]\quad \text{and}\quad B= \max_{z \in \mathbb{T}^d} \big[\lambda_{\max}(z)\big]. 
\]
The sequences $\big\{T_m f_k\big\}_{k\in\mathcal{K},\,m\in M}$ and $\big\{T_m g_k\big\}_{k\in\mathcal{K},\,m\in M}$ are dual frames if and only if  $\mathbf{R}(z)\mathbf{E}(z)=\mathbf{I}_{\det \mathbf{M}}$ for all $z\in \mathbb{T}^d$. The sequence $\big\{T_m f_k\big\}_{k\in\mathcal{K},\,m\in M}$ is a tight frame if and only if $\mathbf{E}^*(z)\mathbf{E}(z)=A \,\mathbf{I}_{\det \mathbf{M}}$, $z\in \mathbb{T}^d$.  
Whenever $\det \mathbf{M}=K$, the sequence
$\big\{T_m f_k\big\}_{k\in\mathcal{K},\,m\in M}$ is a Riesz basis for $\ell^2(\mathbb{Z}^d)$ if and only if  $\det \mathbf{E}(z)\neq 0$ for all $z \in \mathbb{Z}^d$. In this case, the  optimal Riesz bounds are $A$ and $B$.
\end{cor}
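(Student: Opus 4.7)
The plan is to show that Corollary~\ref{cotasZd} is just the translation of Theorems \ref{dual frames}, \ref{cotas}, \ref{maximally decimated} and Corollary \ref{tight} into the concrete language used in the multidimensional filter bank literature; all the work lies in making the dictionary explicit.

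First I would fix the identifications. The subgroup $M=\mathbf{M}\mathbb{Z}^d$ of $\mathbb{Z}^d$ has index $L=\det\mathbf{M}$ (this is classical lattice counting, and matches the cardinality of the transversal $\mathcal{L}$). The map $n\mapsto \mathbf{M}n$ is a group isomorphism $\mathbb{Z}^d\to M$, so its dual induces $\widehat{M}\cong \mathbb{T}^d$ with the pairing $\langle \mathbf{M}n,z\rangle:=z^n$ for $z\in\mathbb{T}^d$. Under this identification the $M$-Fourier transform of $c\in\ell^2(M)$ becomes $C(z)=\sum_{n\in\mathbb{Z}^d}c(\mathbf{M}n)\,z^{-n}$, and in particular $\widehat{M}$ is the compact torus $\mathbb{T}^d$, so minima and maxima over $\widehat{M}$ are minima and maxima over $\mathbb{T}^d$.

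Next I would compute the polyphase matrices \eqref{matriz polifase analisis} and \eqref{matriz polifase sintesis} in this coordinate system. For the analysis polyphase components,
\[
H_{k,\ell}(z)=\sum_{n\in\mathbb{Z}^d}h_k(\mathbf{M}n-\ell)\,z^{-n},
\]
which is precisely the $(k,\ell)$-entry of $\mathbf{E}(z)$; hence $\mathbf{H}(\gamma)=\mathbf{E}(z)$ under the identification $\gamma\leftrightarrow z$. The analogous routine computation gives $\mathbf{G}(\gamma)=\mathbf{R}(z)$.

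With this dictionary in hand, every assertion of the corollary is a direct transcription: the frame characterisation and the explicit optimal bounds $A,B$ come from Theorem~\ref{cotas} applied to $\mathbf{E}^{*}(z)\mathbf{E}(z)$; the dual-frame condition $\mathbf{R}(z)\mathbf{E}(z)=\mathbf{I}_{\det\mathbf{M}}$ is Theorem~\ref{dual frames}; the tight-frame condition $\mathbf{E}^{*}(z)\mathbf{E}(z)=A\,\mathbf{I}_{\det\mathbf{M}}$ is Corollary~\ref{tight}; and the Riesz basis criterion $\det\mathbf{E}(z)\neq0$ (in the maximally decimated case $K=\det\mathbf{M}$) together with its optimal bounds is Theorem~\ref{maximally decimated}. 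There is no genuine obstacle: the only point requiring care is checking that the pairing $\langle \mathbf{M}n,z\rangle=z^n$ is the right one so that the entries of $\mathbf{H}(\gamma)$ really match those of $\mathbf{E}(z)$ (with matching sign conventions in the exponents), after which the corollary is immediate.
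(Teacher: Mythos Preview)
Your proof is correct and follows the same overall strategy as the paper: establish the dictionary between the abstract polyphase matrices $\mathbf{H}(\gamma)$, $\mathbf{G}(\gamma)$ and the concrete matrices $\mathbf{E}(z)$, $\mathbf{R}(z)$, then invoke Theorems~\ref{dual frames}, \ref{cotas}, \ref{maximally decimated} and Corollary~\ref{tight}.

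There is, however, a small but genuine difference in how the dictionary is set up. The paper keeps the identification $\widehat{M}\cong\widehat{G}/M^\perp=\mathbb{T}^d/M^\perp$ used throughout Section~\ref{section3}; with the pairing $\langle m,z\rangle=z^m$ inherited from $\widehat{G}$ one obtains $\mathbf{H}(z+M^\perp)=\mathbf{E}(z^{\mathbf{M}})$, and an extra step is then needed to show that $z\mapsto z^{\mathbf{M}}$ is surjective on $\mathbb{T}^d$ (done via the adjugate, $[r^{\operatorname{adj}\mathbf{M}}]^{\mathbf{M}}=r^{\mathbf{I}\det\mathbf{M}}$) so that the conditions ``for all $\gamma\in\widehat{M}$'' and ``for all $z\in\mathbb{T}^d$'' coincide. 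You instead identify $\widehat{M}\cong\mathbb{T}^d$ directly through the isomorphism $\mathbb{Z}^d\to M$, $n\mapsto\mathbf{M}n$, with pairing $\langle\mathbf{M}n,z\rangle=z^n$; this gives $\mathbf{H}(\gamma)=\mathbf{E}(z)$ on the nose and the surjectivity issue never arises. Your route is slightly cleaner; the paper's route has the advantage of staying within the $\widehat{G}/M^\perp$ formalism already established, at the cost of that extra surjectivity check.
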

\begin{proof}
For a matrix with integer entries $\mathbf{A}$  we define  $z^\mathbf{A}$ as the vector whose
$k$-component is $z_1^{\mathbf{A}_{1,k}}z_2^{\mathbf{A}_{2,k}}\ldots z_d^{\mathbf{A}_{d,k}}$. It can be verified that $[z^\mathbf{A}]^\mathbf{B}=z^{\mathbf{A}\mathbf{B}}$ (see \cite[pp. 581-582]{vaid:93}). Then
\[
\begin{split}
H_{k,\ell}(z + M^\perp)&=
\sum_{m\in M} h_k(m-\ell)z^{-m}=
\sum_{n\in\mathbb{Z}^d} h_k(\mathbf{M}n-\ell)z^{-\mathbf{M}n}\\&=
\sum_{n\in\mathbb{Z}^d} h_k(\mathbf{M}n-\ell)[z^{\mathbf{M}}]^{-n}=E_{k,\ell}(z^{\mathbf{M}}).
\end{split}
\]
 ($z + M^\perp$ denotes an element of $\mathbb{T}^d/M^\perp$) and analogously $G_{\ell,k}(z+ M^\perp)=R_{\ell,k}(z^{\mathbf{M}})$. Then
\[
\mathbf{H}(z + M^\perp)=\mathbf{E}(z^\mathbf{M}),\quad \mathbf{G}(z+ M^\perp)=\mathbf{R}(z^\mathbf{M}).
\]
 Besides, 
for any $z\in \mathbb{T}^d$ there exists $s\in \mathbb{T}^d$ such $s^\mathbf{M}=z$. 
Indeed,
there exists $r\in \mathbb{T}^d$ such that $r_{j}^{\det \mathbf{M}}=z_j$ and then 
$
[r^{\text{adj} \mathbf{M}}] ^{\mathbf{M}}= r^{(\text{adj} \mathbf{M}) \mathbf{M}}= r^{\mathbf{I}\det\mathbf{M}}=z
$.
By using these two facts, the corollary is a consequence of Theorems \ref{dual frames}, \ref{cotas} and \ref{maximally decimated} and Corollary \ref{tight}.
\end{proof}

\medskip

This corollary generalizes, to the multidimensional case, the results obtained in \cite{bol:98} and \cite{vetterli:98} for the unidimensional case.

\subsection{The case $G=\mathbb{Z}_s$ and $M=L\, \mathbb{Z}_s$}

Assume that $s=LN$, with $L, N\in \mathbb{N}$. Whenever $G=\mathbb{Z}_s$ and $ M=L\mathbb{Z}_s$
we  could take $\mathcal{L}=\{0,-1,\ldots, -(L-1)\}$ (mod $s$) (see \cite{vaid:97,vaid:99}). In the following corollary we write the results in terms of the $K\times L$ and $L\times K$ {\em polyphase matrices} defined in \cite{vaid:97,vaid:99}:
\[
\mathbf{E}(n)=\Big[\sum_{m=0}^{N-1} h_k(Lm-\ell)W_N^{-mn}\Big]_{k\in \mathcal{K},\, \ell\in \mathcal{L}},\quad
\mathbf{R}(n)=\Big[\sum_{m=0}^{N-1} g_k(Lm+\ell)W_N^{-mn}\Big]_{ \ell\in \mathcal{L} ,k\in \mathcal{K}},\quad n \in \mathbb{Z}_N.
\]
Note that the $N$-point DFT appears since $\widehat{M}\cong \mathbb{Z}_s/M^\perp \cong \mathbb{Z}_s/(N\mathbb{Z}_s)\cong \mathbb{Z}_N$.

\medskip
\begin{cor}\label{cotasZs}
Under the above circumstances, consider the filter bank described in Fig.~\ref{f1}. Let $\lambda_{\min}(n)$  and $\lambda_{\max}(n)$ be  the smallest and the largest eigenvalue of  the $L\times L$ matrix $\mathbf{E}^*(n)\mathbf{E}(n)$. The sequence
$\big\{T_m f_k\big\}_{k\in\mathcal{K},\,m\in M}$ is a frame for $\ell^2(\mathbb{Z}_s)$ if and only if Rank $\mathbf{E}(n)=L$ for all $n \in \mathbb{Z}_N$.
In this case, the optimal frame bounds are
\[A= \min_{n \in \mathbb{Z}_N} \big[\lambda_{\min}(n)\big]\quad \text{and} \quad B= \max_{n \in \mathbb{Z}_N} \big[\lambda_{\max}(n)\big]. 
\]
It is tight frame if and only if $\mathbf{E}^*(n)\mathbf{E}(n)=A \mathbf{I}_L$  for all $n\in \mathbb{Z}_N$.  The sequences 
$\big\{T_m f_k\big\}_{k\in\mathcal{K},\,m\in M}$ and $\big\{T_m g_k\big\}_{k\in\mathcal{K},\,m\in M}$ are dual frames if and only if  $\mathbf{R}(n)\mathbf{E}(n)=\mathbf{I}_L$ for all $n\in \mathbb{Z}_N$. Whenever $L=K$, the sequence
$\big\{T_m f_k\big\}_{k\in\mathcal{K},\,m\in M}$ is a Riesz basis for $\ell^2(\mathbb{Z}_s)$ if and only if  $\det \mathbf{E}(n)\neq 0$ for all $n \in \mathbb{Z}_N$. In this case, the  optimal Riesz bounds are $A$ and $B$.
\end{cor}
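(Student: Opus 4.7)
The plan is to mirror the strategy used in the proof of Corollary \ref{cotasZd}: identify the abstract polyphase matrices $\mathbf{H}(\gamma)$ and $\mathbf{G}(\gamma)$ from Section \ref{section3} with the concrete cyclic polyphase matrices $\mathbf{E}(n)$ and $\mathbf{R}(n)$ defined just above, and then invoke Theorems \ref{dual frames}, \ref{cotas}, \ref{maximally decimated} and Corollary \ref{tight} directly.

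First I would make explicit the identification $\widehat{M}\cong \mathbb{Z}_N$. Since $M=L\mathbb{Z}_s$ with $s=LN$, the annihilator is $M^\perp = N\mathbb{Z}_s$ (which has $L$ elements), and $\widehat{M}\cong \mathbb{Z}_s/M^\perp\cong \mathbb{Z}_N$. Under this identification the pairing becomes $\langle Lm, n\rangle = W_N^{mn}$ for $m\in\{0,\ldots,N-1\}$ and $n\in\mathbb{Z}_N$; this is routine from the definition $\langle k,n\rangle = W_s^{kn}$ on $\mathbb{Z}_s$ together with $W_s^{Lmn}=W_N^{mn}$.

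Next, I would compute the $M$-Fourier transforms of the polyphase components $h_{k,\ell}(m)=h_k(m-\ell)$, restricted to $m\in M=L\mathbb{Z}_s$. Writing $m=Lm'$ with $m'\in\{0,\ldots,N-1\}$ and using the pairing above, I get
\[
H_{k,\ell}(n)=\sum_{m'=0}^{N-1} h_k(Lm'-\ell)\,W_N^{-m'n}=E_{k,\ell}(n),\qquad n\in\mathbb{Z}_N,
\]
and similarly $G_{\ell,k}(n)=R_{\ell,k}(n)$. Hence $\mathbf{H}(\gamma)=\mathbf{E}(n)$ and $\mathbf{G}(\gamma)=\mathbf{R}(n)$ upon identifying $\gamma\in\widehat{M}$ with $n\in\mathbb{Z}_N$.

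With this identification in hand, each assertion of the corollary follows immediately from the corresponding general result: the frame/Bessel characterization with optimal bounds $A,B$ comes from Theorem \ref{cotas}; the tight-frame characterization from Corollary \ref{tight}; the dual-frame characterization from Theorem \ref{dual frames}; and the Riesz-basis characterization (when $L=K$) from Theorem \ref{maximally decimated}. The only step requiring any care is the pairing computation in the identification $\widehat{M}\cong\mathbb{Z}_N$, which amounts to checking how the $s$-point characters on $\mathbb{Z}_s$ descend to $N$-point characters on $L\mathbb{Z}_s/N\mathbb{Z}_s$; unlike the $\mathbb{Z}^d$ case there is no need to invoke surjectivity of a lattice map $z\mapsto z^{\mathbf{M}}$, so this proof is actually lighter than that of Corollary \ref{cotasZd}.
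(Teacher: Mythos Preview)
Your proposal is correct and follows essentially the same approach as the paper: you identify $\widehat{M}\cong\mathbb{Z}_N$ via the pairing $\langle Lm,n\rangle=W_N^{mn}$, verify that $\mathbf{H}=\mathbf{E}$ and $\mathbf{G}=\mathbf{R}$ entrywise, and then invoke Theorems~\ref{dual frames}, \ref{cotas}, \ref{maximally decimated} and Corollary~\ref{tight}. The paper's proof is exactly this, phrased slightly more tersely and noting the $N$-periodicity of $\mathbf{E}$ and $\mathbf{R}$ rather than the isomorphism $\mathbb{Z}_s/N\mathbb{Z}_s\cong\mathbb{Z}_N$ explicitly.
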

\begin{proof}
We have 
$\widehat{M}\cong \widehat{G}/M^\perp \cong \mathbb{Z}_{s}/M^\perp $ with
 $\langle Lm,n+M^\perp \rangle=W_s^{mLn}=W_N^{mn}$. Then
\[
H_{k,\ell}(n+M^\perp)=\sum_{m=0}^{N-1} h_k(Lm-\ell)\overline{\langle Lm,n \rangle}=
\sum_{m=0}^{N-1} h_k(Lm-\ell)W_N^{-mn}=E_{k,\ell}(n)
\]
and analogously $G_{\ell,k}(n)=R_{\ell,k}(n)$. Hence,the corollary is a consequence of Theorems \ref{dual frames}, \ref{cotas} and \ref{maximally decimated} and Corollary \ref{tight}, having in mind that $\mathbf{E}(n)$ and $\mathbf{R}(n)$ are $N$-periodic. 
\end{proof}

\medskip

Some of these results can be found in \cite{chai:07,chebira:11,johnson:08,fickus:13}.

\subsection{The case $G=\mathbb{Z}^d \times \mathbb{Z}_s$ and $M= \mathbf{M}\,\mathbb{Z}^d \times L\, \mathbb{Z}_s$}

Consider now the tensor product of the two previous examples, i.e.,
$G=\mathbb{Z}^d \times \mathbb{Z}_s$ and $M= \mathbf{M}\mathbb{Z}^d \times L\mathbb{Z}_s$,
where $\mathbf{M}$ is a matrix with integer entries, $\det \mathbf{M}>0$ and $s=L N$. 
We could take $\mathcal{L}=\mathcal{N}(\mathbf{M})\times \{0,1,\ldots, (L-1)\}$.  
Set the $K\times L$ and $L\times K$ matrices
\[
\begin{split}
&\mathbf{E}(z,n)=\Big[\sum_{m=0}^{N-1}\sum_{u\in \mathbb{Z}^d} h_k\big([\mathbf{M}u,Lm]-\ell)z^{-u}W_N^{-mn}\Big]_{k\in \mathcal{K},\, \ell\in \mathcal{L}}
\\&\mathbf{R}(z,n)=\Big[\sum_{m=0}^{N-1}\sum_{u\in \mathbb{Z}^d} g_k\big([\mathbf{M}u,Lm]+\ell)z^{-u}W_N^{-mn}\Big]_{\ell\in \mathcal{L}, \,k\in \mathcal{K}}
\end{split}
\]

\medskip
\begin{cor}\label{cotasZdZs} Under the above circumstances, the filter bank described in Fig.~\ref{f1} satisfies the perfect reconstruction property if and only if $\mathbf{R}(z,n)\mathbf{E}(z,n)=\mathbf{I}_{L\det \mathbf{M}}$  for all $z\in \mathbb{T}^d$ and $n\in \mathbb{Z}_N$.
Let $\lambda_{\min}(z,n)$  and $\lambda_{\max}(z,n)$ be  the smallest and the largest eigenvalue of the $L\det \mathbf{M} \times L\det \mathbf{M}$ matrix  $\mathbf{E}^*(z,n)\mathbf{E}(z,n)$. The sequence
$\big\{T_m f_k\big\}_{k\in\mathcal{K},\,m\in M}$ is a frame for $\ell^2(\mathbb{Z}^d \times \mathbb{Z}_s)$ if and only if Rank $\mathbf{E}(z,n)=L\det \mathbf{M}$ for all $z \in \mathbb{T}^d$ and $n\in \mathbb{Z}_N$.
In this case, the optimal frame bounds are
\[A= \min_{z \in \mathbb{T}^d,n\in \mathbb{Z}_N}\big[ \lambda_{\min}(z,n)\big]\quad \text{and} \quad B= \max_{z \in \mathbb{T}^d,n\in \mathbb{Z}_N}\big[ \lambda_{\max}(z,n)\big]. 
\]
 Whenever $K=L\det \mathbf{M}$, the sequence $\big\{T_m f_k\big\}_{k\in\mathcal{K},\,m\in M}$ is a Riesz basis for $L^2(\mathbb{Z}^d \times \mathbb{Z}_s)$ if and only if  $\det \mathbf{E}(z,n)\neq 0$ for all $z \in \mathbb{Z}^d$ and $n\in \mathbb{Z}_N$. In this case, the  optimal Riesz bounds are $A$ and $B$.
\end{cor}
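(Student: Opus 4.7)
The plan is to mimic the proofs of Corollaries \ref{cotasZd} and \ref{cotasZs} and combine them via the product structure of the dual group, so no genuinely new analytic content is needed beyond the bookkeeping of indices. First I would identify $\widehat{M}$. Since $M=\mathbf{M}\mathbb{Z}^d\times L\mathbb{Z}_s$ is a direct product of subgroups, the annihilator splits as $M^\perp=(\mathbf{M}\mathbb{Z}^d)^\perp\times (L\mathbb{Z}_s)^\perp$ inside $\widehat{G}=\mathbb{T}^d\times \mathbb{Z}_s$, and each coset of $M^\perp$ admits a representative $(z,n)$ with $z\in \mathbb{T}^d$ and $n\in \mathbb{Z}_N$. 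The pairing required by \eqref{poli} then reads
\[
\big\langle (\mathbf{M}u,Lm),\,(z,n)+M^\perp \big\rangle \;=\; z^{\mathbf{M}u}\,W_N^{mn},
\]
which is just the tensor product of the pairings used in the two previous corollaries.

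Next, I would split each $\ell\in \mathcal{L}$ as $\ell=(\ell_1,\ell_2)\in \mathcal{N}(\mathbf{M})\times\{0,1,\ldots,L-1\}$ and compute the $M$-Fourier transform of the polyphase components. A direct calculation, combining the two earlier ones, should give
\[
H_{k,\ell}\big((z,n)+M^\perp\big) \;=\; E_{k,\ell}\big(z^{\mathbf{M}},\,n\big), \qquad G_{\ell,k}\big((z,n)+M^\perp\big) \;=\; R_{\ell,k}\big(z^{\mathbf{M}},\,n\big),
\]
so that $\mathbf{H}(\gamma)=\mathbf{E}(z^{\mathbf{M}},n)$ and $\mathbf{G}(\gamma)=\mathbf{R}(z^{\mathbf{M}},n)$ whenever $\gamma=(z,n)+M^\perp$. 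Using the surjectivity of the map $z\mapsto z^{\mathbf{M}}$ on $\mathbb{T}^d$ (via the adjugate argument already recorded in the proof of Corollary \ref{cotasZd}) and the obvious surjectivity of $\mathbb{Z}_s\to \mathbb{Z}_N$, the quantifier ``for all $\gamma\in \widehat{M}$'' becomes ``for all $(z,n)\in \mathbb{T}^d\times \mathbb{Z}_N$'' in every statement about $\mathbf{H}$ or $\mathbf{G}$.

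With this translation in hand, each assertion of the corollary reduces to the corresponding statement of the general theory: the perfect reconstruction and dual frame claims come from Theorems \ref{PR} and \ref{dual frames}; the frame characterization and the optimal bounds $A,B$ from Theorem \ref{cotas} (noting that $\widehat{M}\cong \mathbb{T}^d\times \mathbb{Z}_N$ is compact so the min/max are attained); and the Riesz basis statement, valid under $K=L\det\mathbf{M}$, from Theorem \ref{maximally decimated}. The main obstacle is purely notational: carefully verifying the product pairing formula and the identification $H_{k,\ell}((z,n)+M^\perp)=E_{k,\ell}(z^{\mathbf{M}},n)$; once that is in place, everything else is an invocation of the results already proved in Section \ref{section3}.
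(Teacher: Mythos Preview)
Your proposal is correct and follows essentially the same route as the paper: identify $\mathbf{H}(\gamma)=\mathbf{E}(z^{\mathbf{M}},n)$ and $\mathbf{G}(\gamma)=\mathbf{R}(z^{\mathbf{M}},n)$ via the product pairing, then use surjectivity of $z\mapsto z^{\mathbf{M}}$ on $\mathbb{T}^d$ (together with $N$-periodicity in the second variable) to convert the quantifier over $\widehat{M}$ into one over $\mathbb{T}^d\times\mathbb{Z}_N$, and finally invoke Theorems~\ref{PR}, \ref{cotas}, and~\ref{maximally decimated}. The only minor difference is that the paper does not cite Theorem~\ref{dual frames} here (the corollary states perfect reconstruction rather than the dual-frame formulation), but this does not affect correctness.
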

\begin{proof}
We have
\[
\begin{split}
H_{k,\ell}\big((z,n) + M^\perp\big)&=
\sum_{u\in\mathbb{Z}^d} \sum_{m=0}^{N-1}h_k\big((\mathbf{M}u,mL)-\ell\big) \overline{\langle (\mathbf{M}u,mL), (z,n)\rangle}\\&=
\sum_{u\in\mathbb{Z}^d} \sum_{m=0}^{N-1}h_k\big((\mathbf{M}u,mL)-\ell\big)z^{-\mathbf{M}u} W_s^{-mLn}\\&=
\sum_{u\in\mathbb{Z}^d} \sum_{m=0}^{N-1} h_k\big((\mathbf{M}u,mL)-\ell\big)[z^{\mathbf{M}}]^{-u} W_N^{-mn}=E_{k,\ell}(z^{\mathbf{M}},n)
\end{split}
\]
and analogously $G_{\ell,k}\big((z,n)+ M^\perp\big)=R_{\ell,k}(z^{\mathbf{M}},n)$.  Besides (it was proved in previous proof)
for any $z\in \mathbb{T}^d$ there exist $s\in \mathbb{T}^d$ such $s^\mathbf{M}=z$. 
Thus, having in mind the $N$-periodicity, the corollary is a consequence of Theorems \ref{PR}, \ref{cotas} and \ref{maximally decimated}.
\end{proof}

\subsection{The case $G=\mathbb{Z}_{2P} \times\mathbb{Z}_{2Q}$ and $M$ the Quincunx}
Given $P, Q\in \mathbb{N}$, the Quincunx $M$ consists of  the elements $(n,m)$ in $\mathbb{Z}_{2P} \times \mathbb{Z}_{2Q}$ such that $n$ and $m$ are both even or both odd; it  is a subgroup of $\mathbb{Z}_{2P} \times \mathbb{Z}_{2Q}$. In this case we could take $\mathcal{L}=\{(0,0),(1,0)\}$. Consider the $[P,Q]$-Points DFT transform
  \[
 \big[ \text{DFT} \, x \big] (n,m)=\sum_{u=0}^{P-1}
  \sum_{v=0}^{Q-1} x(u,v)W_P^{-un}\,W_Q^{-vm}\,,
  \]
and the transform
\[
[\Lambda x](n,m)= \big[ \text{DFT}\, x_0\big] (n,m)+W_{2P}^{-n}\,W_{2Q}^{-m}\big[ \text{DFT}\,  x_1\big] (n,m),
\]  
where 
 $x_0(n,m)=x(2n,2m)$ and $x_1(n,m)=x(2n+1,2m+1)$. Respectively set the $K\times 2$ and $2\times K$ matrices
  \[
  \mathbf{E}(n,m)=\Big[\Lambda h_{k,\ell}(n,m)\Big]_{k\in \mathcal{K},\, \ell\in \mathcal{L}}\quad \text{and} \quad
  \mathbf{R}(n,m)=\Big[\Lambda g_{\ell,k}(n,m)\Big]_{ \ell\in \mathcal{L}, k\in \mathcal{K}}\,.
  \]  

\begin{cor}\label{quincunx} Under the above circumstances, the filter bank described in Fig.~\ref{f1} has the perfect reconstruction property if and only if $\mathbf{R}(n,m)\,\mathbf{E}(n,m)=\mathbf{I}_{2}$ for all $(n,m)\in \mathbb{Z}_{2P} \times \mathbb{Z}_Q$.
\end{cor}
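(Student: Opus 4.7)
The plan is to derive Corollary~\ref{quincunx} as a direct specialization of Theorem~\ref{PR} to $G=\mathbb{Z}_{2P}\times\mathbb{Z}_{2Q}$ with $M$ the Quincunx lattice and transversal $\mathcal{L}=\{(0,0),(1,0)\}$, so that $L=2$. Theorem~\ref{PR} then gives perfect reconstruction if and only if $\mathbf{G}(\gamma)\mathbf{H}(\gamma)=\mathbf{I}_2$ for every $\gamma\in\widehat{M}$, and the task reduces to matching the general polyphase matrices $\mathbf{H}(\gamma)$ and $\mathbf{G}(\gamma)$ with the matrices $\mathbf{E}(n,m)$ and $\mathbf{R}(n,m)$ of the statement under a suitable parameterization of $\widehat{M}$ by $(n,m)\in\mathbb{Z}_{2P}\times\mathbb{Z}_Q$.

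First I would parameterize $\widehat{M}$. For each $(n,m)\in\mathbb{Z}_{2P}\times\mathbb{Z}_Q$ set $\gamma_{n,m}(u,v):=W_{2P}^{un}W_{2Q}^{vm}$ for $(u,v)\in M$; each $\gamma_{n,m}$ is a character of $M$ (it is the restriction of a character of $G$). Testing against the generators $(0,2)$ and $(1,1)$ of $M$ shows that distinct pairs $(n,m)\in\mathbb{Z}_{2P}\times\mathbb{Z}_Q$ produce distinct characters, and since $|\widehat{M}|=|M|=2PQ=|\mathbb{Z}_{2P}\times\mathbb{Z}_Q|$ the map $(n,m)\mapsto\gamma_{n,m}$ is a bijection onto $\widehat{M}$.

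Next I would verify that the $M$-Fourier transform takes the form $\Lambda$ under this parameterization. For any $c\in\ell^2(M)$, splitting the sum over $M$ into its even-even part $\{(2a,2b)\}$ and its odd-odd part $\{(2a+1,2b+1)\}$ (with $a\in\mathbb{Z}_P$, $b\in\mathbb{Z}_Q$) yields
\[
C(\gamma_{n,m})=\sum_{a,b}c(2a,2b)W_P^{-an}W_Q^{-bm}+W_{2P}^{-n}W_{2Q}^{-m}\sum_{a,b}c(2a+1,2b+1)W_P^{-an}W_Q^{-bm},
\]
which is exactly $\Lambda c(n,m)$. Applying this identity to the polyphase components $h_{k,\ell}$ and $g_{\ell,k}$ (which lie in $\ell^1(M)$ since $h_k,g_k\in\ell^1(G)$) gives $H_{k,\ell}(\gamma_{n,m})=\Lambda h_{k,\ell}(n,m)$ and $G_{\ell,k}(\gamma_{n,m})=\Lambda g_{\ell,k}(n,m)$; hence $\mathbf{H}(\gamma_{n,m})=\mathbf{E}(n,m)$ and $\mathbf{G}(\gamma_{n,m})=\mathbf{R}(n,m)$. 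Substituting into the criterion supplied by Theorem~\ref{PR} yields the corollary.

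The main obstacle I expect is the parameterization step: one must confirm that $\widehat{M}$ is indexed by the asymmetric range $\mathbb{Z}_{2P}\times\mathbb{Z}_Q$ rather than by, say, $\mathbb{Z}_P\times\mathbb{Z}_{2Q}$ or $\mathbb{Z}_{2PQ}$. This asymmetry reflects the particular choice of transversal $\mathcal{L}=\{(0,0),(1,0)\}$ built into the definition of $\Lambda$, and it must be pinned down via the generator argument above before the entries of $\mathbf{H}$ and $\mathbf{G}$ can be safely identified with those of $\mathbf{E}$ and $\mathbf{R}$; once that is in place, the remaining computations are routine.
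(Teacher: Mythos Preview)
Your proposal is correct and follows essentially the same route as the paper: apply Theorem~\ref{PR}, compute the $M$-Fourier transform by splitting $M$ into its even--even and odd--odd parts to recover $\Lambda$, and thereby identify $\mathbf{H},\mathbf{G}$ with $\mathbf{E},\mathbf{R}$. The only cosmetic difference is that the paper first parameterizes $\widehat{M}\cong(\mathbb{Z}_{2P}\times\mathbb{Z}_{2Q})/M^\perp$ by all of $\mathbb{Z}_{2P}\times\mathbb{Z}_{2Q}$ and then reduces to $\mathbb{Z}_{2P}\times\mathbb{Z}_Q$ a posteriori via the periodicity identities $\Lambda(n+P,m+Q)=\Lambda(n,m)$ and $\Lambda(n,m+Q)=\Lambda(n+P,m)$, whereas you set up the bijection $\mathbb{Z}_{2P}\times\mathbb{Z}_Q\to\widehat{M}$ directly by an injectivity-plus-cardinality argument.
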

\begin{proof}
We have 
$\widehat{M} \cong \widehat{G}/M^\perp \cong (\mathbb{Z}_{2P}\times \mathbb{Z}_{2Q})/M^\perp$ with
\[
\begin{split}
&\big\langle (2u,2v), (n,m)+M^\perp \big\rangle= W_{2P}^{2un}\, W_{2Q}^{2vm}= W_{P}^{un} \,W_{Q}^{vm}\\
&\big\langle (2u+1,2v+1), (n,m)+M^\perp \big\rangle= W_{2P}^{(2u+1)n} \,W_{2Q}^{(2v+1)m}= W_{2P}^{n}\, W_{2Q}^{m} \,W_{P}^{un}\, W_{Q}^{vm}.
\end{split}
\]
Then, the $M$-Fourier transform of a function $h$ is given by
\[
\begin{split}
&H\big((n,m)+M^\perp\big)= 
\sum_{u=0}^{P-1}\sum_{v=0}^{Q-1}\big[ h_0(u,v) W_{P}^{-un}\, W_{Q}^{-vm}+ h_1(u,v) W_{P}^{-un}\, W_{Q}^{-vm} \,W_{2P}^{-n} \,W_{2Q}^{-m}\big].
\end{split}
\]
Hence, from Theorem \ref{PR},  the filter bank satisfies the perfect reconstruction property if and only if $\mathbf{R}(n,m)\,\mathbf{E}(n,m)=\mathbf{I}_2$ for all $(n,m)\in \mathbb{Z}_{2P} \times \mathbb{Z}_{2Q}$. Since $\Lambda(n+P,m+Q)=\Lambda(n,m)$ and
$\Lambda(n,m+Q)=\Lambda(n+P,m)$, it suffices to consider $(n,m)\in \mathbb{Z}_{2P} \times \mathbb{Z}_{Q}$.
\end{proof}

Note that $\widehat{M}\cong (\mathbb{Z}_{2P}\times \mathbb{Z}_{2Q})/M^\perp \cong (\mathbb{Z}_{2P}\times \mathbb{Z}_{2Q})/\{(0, 0), (P, Q)\} \cong \mathbb{Z}_{2P}\times \mathbb{Z}_{Q}$.

\bigskip

\noindent{\bf Acknowledgments:} This work has been supported by the grant MTM2014-54692-P from the Spanish {\em Ministerio de Econom\'{\i}a y Competitividad (MINECO)}

\end{document}